\definecolor{LightCyan}{rgb}{0.88,1,1}
\newcommand{\rd}{\delta}
\newcommand*\samethanks[1][\value{footnote}]{\footnotemark[#1]}
\title{Lagrangian Reachabililty}
\author{Jacek Cyranka\inst{1}\thanks{Both authors contributed equally to this work.}, 
Md. Ariful Islam\inst{2}\samethanks, 
Greg Byrne\inst{3}, Paul Jones\inst{4},\\ Scott A. Smolka\inst{5}, Radu Grosu\inst{6}}
\institute{  	
	Rutgers University
    \{\email{jcyranka@gmail.com}\}
    \and	
	Carnegie Mellon University
    \{\email{mdarifui@cs.cmu.edu}\}
    \and
	Metron Inc., Reston, VA
    \{\email{gr3g.byrne@gmail.com}\}
    \and
	US Food and Drug Administration
    \{\email{paul.jones@fda.hhs.gov}\}
    \and	
	Stony Brook University
    \{\email{sas@cs.stonybrook.edu}\}
    \and	
	Vienna University of Technology
    \{\email{radu.grosu@tuwien.ac.at}\}
}
\begin{document}
\maketitle
\vspace{-2ex}
\begin{abstract}
\label{sec:abstract}
We introduce LRT, a new Lagrangian-based ReachTube computation algorithm that conservatively approximates the set of reachable states of a nonlinear dynamical system. LRT makes use of the Cauchy-Green stretching factor (SF), which is derived from an over-approximation of the gradient of the solution-flows. The SF measures the discrepancy between two states propagated by the system solution from two initial states lying in a well-defined region, thereby allowing LRT to compute a reachtube with a ball-overestimate in a metric where the computed enclosure is as tight as possible.  To evaluate its performance, we implemented a prototype of LRT in C++/Matlab, and ran it on a set of well-established benchmarks. Our results show that LRT compares very favorably with respect to the CAPD and Flow* tools.
\end{abstract}
\section{Introduction}
\label{sec:intro}
Bounded-time reachability analysis is an essential technique for ensuring the safety of emerging systems, such as cyber-physical systems (CPS) and controlled biological systems (CBS). However, computing the reachable states of CPS and CBS is a very difficult task as these systems are most often nonlinear, and their state-space is uncountably infinite.  As such, these systems typically do not admit a closed-form solution that can be exploited during their analysis.

%One way to adress nonlinearity, is to (piecewise) approximate it in a linear fashion. Once this is achieved, one can exploit the reach set of available tools and techniques. They all take advantage of the known closed-form solution. A state of the art tool in this respect is SpaceEx \cite{spacex}.  The larger class of polynomial-affine systems is addressed, for example, in~\cite{ASB}, \cite{JA}.

For CPS/CBS, one can therefore only compute point solutions (trajectories) through numerical integration and for predefined inputs.  To cover the infinite set of states reachable by the system from an initial region, one needs to conservatively extend (symbolically surround) these pointwise solutions by enclosing them in \emph{reachtubes}. Moreover, the starting regions of these reachtubes have to cover the initial states region. 

The class of continuous dynamical systems we are interested in this paper are nonlinear, time-variant ordinary differential equations (ODEs): 
\begin{subequations}
\label{cauchy}
\begin{align}
  x^\prime(t)&= F( t, x(t) ),\\
  x(t_0)&= x_0,
\end{align}
\end{subequations}
\noindent{}where $x\colon\mathbb{R}\to\mathbb{R}^n$. We assume that $F$ is a smooth function, which guarantees short-time existence of solutions. The class of time-variant systems includes the class of time-invariant  systems. Time-variant equations may contain additional terms, e.g. an excitation variable, and/or periodic forcing terms.

For a given initial time $t_0$, set of initial states $\mathcal{X}\subset\mathbb{R}^n$, and time bound $T>t_0$, our goal is to compute  conservative \emph{reachtube} of \eqref{cauchy}, that is, a sequence of time-stamped sets of states $(R_1,t_1),\dots,(R_k,t_k)$ satisfying:
\[
\text{Reach}\left((t_0,\mathcal{X}\right),[t_{i-1},t_i]) \subset R_i\text{ for }i = 1,\dots,k,
\]
where $\text{Reach}\left((t_0,\mathcal{X}\right),[t_{i-1},t_i])$ denotes the set of reachable states of \eqref{cauchy} in the interval $[t_{i-1},t_i]$. Whereas there are many sets satisfying this requirement, of particular interest to us are reasonably tight reachtubes; i.e.\ reachtubes whose over-approximation 
%% over the true reach set 
is the tightest possible, having in mind the goal of proving that a certain region of the phase space is (un)safe, and avoiding false positives.
In practice and for the sake of comparision with other methods, we compute a discrete-time reachtube; as we discuss, a continuous reachtube can be obtained using our algorithm.
%There is a rich body of work for the computation reachtubes. Among the various techniques developed, one can mention the following: sensitivity analysis \cite{D,donze,dang,wholElse?}, optimization techniques~\cite{feinekos,sankaranaryan,strong?,whoelse?}, conservative numeric-bounds computation~\cite{CAPD}, \cite{Lo}, \cite{ZW1}, \cite{Z1}, \cite{CAS}, \cite{CAS2}, \cite{vnode}, \cite{BM}\cite{MB}, or discrepancy functions~\cite{Fan2015}, \cite{FKXS}. 

Existing tools and techniques for conservative reachtube computation
%%, most closely related to our work,
can be classified by the time-space approximation they perform into three categories: (1)~Taylor-expansion in time, variational-expansion in space (wrapping-effect reduction) of the solution set (CAPD~\cite{CAPD,ZW1,Z1}, VNode-LP~\cite{nedialkov2006vnode,vnode2}, (2)~Taylor-expansion in time and space of the solution set (Cosy Infinity~\cite{makino2006cosy,BM,MB}, Flow*~\cite{CAS,CAS2}), 
and (3)~Bloating-factor-based and discrepancy-function-based~\cite{FKXS,Fan2015}.  
The last technique computes a conservative reachtube using a \emph{discrepancy function} (DF) that is derived from an over-approximation of the \emph{Jacobian of the vector field} (usually given by the RHS of the differential equations) defining the continuous dynamical system.   
\begin{figure}
\vspace{-4ex}
\centering
\includegraphics[scale=0.6]{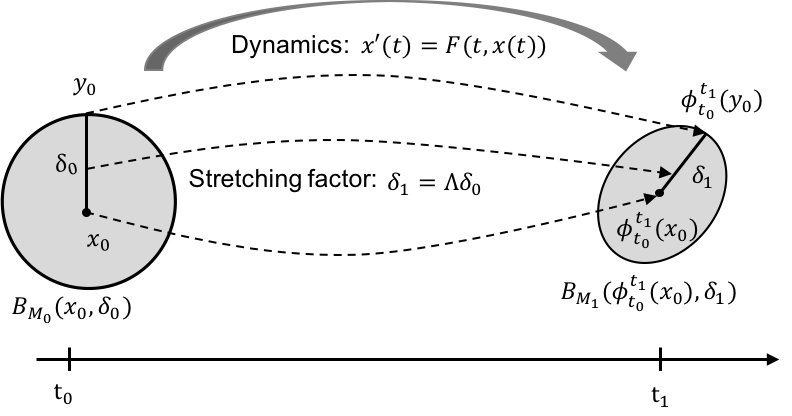}
\caption{An overview of LRT. The figure shows one execution step of the LRT described in detail in Section~\ref{sec:lrt}.  The dashed arrows reflect the solution flow $\phi$ and the evolution of state discrepancy.}
\label{fig:intro_fig}
\vspace{-2ex}
\end{figure}

This paper proposes an alternative (and orthogonal to~\cite{FKXS,Fan2015}) technique for computing a conservative reachtube, by using a \emph{stretching factor (SF)} that is derived from an over-approximation of the \emph{gradient of the solution-flows}, also referred to as the \emph{sensitivity matrix} \cite{breach,donze}, and the \emph{deformation tensor}. 
An illustration of our method is given in Figure~\ref{fig:intro_fig}. $B_{M_0}(x_0,\delta_0)$ is a well-defined initial region given as a ball in metric space $M_0$ centered at $x_0$ of radius $\delta_0$.  The SF $\Lambda$ measures the discrepancy of two states $x_0$, $y_0$ in $B_{M_0}$ propagated by the solution flow induced by \eqref{cauchy}, i.e.\ $\phi_{t_0}^{t_1}$. We can thus use the SF to bound the infinite set of reachable states at time $t_1$ with the ball-overestimate $B_{M_1}(\phi_{t_0}^{t_1}(x_0), \delta_1)$ in an appropriate metric (which may differ from the initial $M_0$), where $\delta_1 = \Lambda\cdot\delta_0$. Similar to~\cite{FKXS}, this metric is based on a weighted  norm, yielding a tightest-possible enclosure of the reach-set~\cite{D,L,LWS,MA}. For two-dimensional system, we present an analytical method to compute $M_1$, but for higher dimensional system, we solve a semi-definite optimization problem. Analytical formulas derived for 2d case allow for faster computation.
We point out that the output provided by LRT can be used to compute a validated bound for the so-called \emph{finite-time Lyapunov exponent} ($FTLE = \frac{1}{T}\ln(SF)$) for a whole set of solutions. FTLE are widely computed in e.g. climate research in order to detect Lagrangian coherent structures.
%
%\begin{comment}
%
%\end{comment}

We call this approach and its associated the LRT, for \emph{Lagrangian Reachtube} computation. The LRT uses analogues of Cauchy-Green deformation tensors (CGD) from finite strain theory (FST) to determine the SF of the solution-flows, after each of its time-step iterations. The LRT algorithm is described thoroughly in Section~\ref{sec:lrt}.

To compute the gradient of the flows, we currently make use of the CAPD $C^1$ routine, which propagates the initial ball (box) using interval arithmetic. The CAPD library has been certified to compute a conservative enclosure of the true solution, and it has been used in many peer-reviewed computer proofs of theorems in dynamical systems, including \cite{GZ,TZ,WZ2}.

To evaluate the LRT's performance, we implemented a prototype in C++/Ma\-tlab and ran it on a set of eight
benchmarks. 
%from~\cite{CAS2,FKXS}, the forced Van der Pol oscillator (time-variant system)~\cite{van1927vii}, and the  Mitchell Schaeffer cardiac cell model~\cite{mitchell03}.
%% In our evaluation, we were only interested in precision and accuracy, and not in the running time.  This is comparable to the one of existing tools. 
Our results show that the LRT compares very favorably to a direct use of CAPD and Flow* (see Section~\ref{sec:results}), while still leaving room for further improvement.  In general, we expect the LRT to behave favorably on systems that exhibit long-run stable behavior, such as orbital stability. 

We did not compare the LRT with the DF-based tools~\cite{Fan2015,FKXS}, although we would have liked to do this very much. The reason is that the publicly-available DF-prototype has not yet been certified to produce conservative results. Moreover, the prototype only considers time-invariant systems.

The rest of the paper is organized as follows. Section~\ref{sec:bgd} reviews finite-strain theory and the Cauchy-Green-deformation tensor for flows. Section~\ref{sec:lrt} presents the LRT, our main contribution, and proves that it conservatively computes the reachtube of a Cauchy system. Section~\ref{sec:results} compares our results to CAPD and Flow* on six benchmarks from~\cite{CAS2,FKXS}, the forced Van der Pol oscillator (time-variant system)~\cite{van1927vii}, and the  Mitchell Schaeffer cardiac cell model~\cite{mitchell03}. Section~\ref{sec:conclusions} offers our concluding remarks and discusses future work.

\section{Background on Flow Deformation}
\label{sec:bgd}
In this section we present some background on the LRT. First, in Section~\ref{seclagrangian} we briefly recall the general FST, as in the LRT we deal with matrices analogous to Cauchy-Green deformation tensors. Second, in Section~\ref{sec:cgd} we show how the Cauchy-Green deformation tensor can be used to measure discrepancy of two initial states propagated by the flow induced by Eq.~\eqref{cauchy}.
\subsection{Finite Strain Theory and Lagrangian Description of the Flow}
\label{seclagrangian}
In classical continuum mechanics, \emph{finite strain theory (FST)}  deals with the deformation of  a continuum body in which both rotation and strain can be arbitrarily large.
Changes in the configuration of the body are described by a displacement field. Displacement fields relate the initial configuration with the deformed configuration, which can be significantly different. FST has been applied, for example, in stress/deformation analysis of media like elastomers, plastically-deforming materials, and fluids, modeled by constitutive models (see e.g.,~\cite{HY}, and the references provided there). In the \emph{Lagrangian} 
representation, the coordinates describe the deformed configuration (in the material-reference-frame spatial coordinates), whereas in the \emph{Eulerian} representation,  the coordinates describe the undeformed configuration (in a fixed-reference-frame spatial coordinates).

\paragraph{Notation} In this section we use the standard notation used in the literature on FST. We use $X$ to denote the position of a particle in the Eulerian coordinates (undeformed), 
and $x$ to denote the position of a particle in the Lagrangian coordinates (deformed).
The Lagrangian coordinates depend on the initial (Eulerian) position, and the time $t$, so we use $x(X,t)$ to denote the position of a particle in Lagrangian coordinates.

The \emph{displacement field} from the initial configuration to the deformed configuration in Lagrangian coordinates is given by the following equation:
\begin{equation}
\label{displacement}
u(X,t) = x(X,t) - X.
\end{equation}
%\begin{comment}
The dependence $\nabla_X{u}$ of the displacement field $u(X,t)$ on the initial condition $X$ is called the \emph{material displacement gradient tensor}, with
\begin{equation}
\label{mdgt}
\nabla_X{u(X,t)} = \nabla_X{x}(X,t)-I,
\end{equation}
where $\nabla_X{x}$ is called the \emph{deformation gradient tensor}.

We now investigate how an initial perturbation $X+dX$ in the Eulerian coordinates evolves to the deformed configuration $dx(X+dX,t)$ in Lagrangian coordinates by using~\eqref{displacement}. This is called a \emph{relative displacement vector:}
%Let $dX$ be an initial perturbation. The perturbed initial position $X + dX$
%\[
%X + dX,
%\]
%in the Lagrangian coordinates, by using the displacement \eqref{displacement}, is
\[dx(X+dX,t)=x(X+dX,t)-x(X,t)=u(X+dX,t)+dX-u(X,t),\]
As a consequence, for small $dX$ we obtain the following approximate equality:
\begin{equation}
\label{amvt}
dx(X+dX,t) \approx u(X+dX,t)-u(X,t).
\end{equation}
Now let us compute $u(X+dX,t)$ by expressing $du(X+dX,t)$ as with $x(X+dX,t)$ above. One obtains:
\[
u(X+dX,t) = u(X,t) + du(X+dX,t) = 
%u(X,t) + \frac{\partial{u}(X+dX,t)}{\partial{X}}dX = 
u(X,t) + \nabla_X{u(X+dX,t)}dX.
\]
%Hence, $u(X+dX,t)$ is defined by the equation below:
%\[
%u(X+dX,t) = u(X,t) + \nabla_X{u(X,t)}dX.
%\]
Now by replacing $u(X+dX,t)$ in  Equation~(\ref{amvt}) above, one obtains the following result:
\[
dx(X+dX,t) \approx \nabla_X{x(X+dX,t)}dX.
\]
%\[
%dx(X+dX,t) = \nabla_X{x(X,t)}dX.
%\]
%\end{comment}
%radu: defined above: We call $\nabla_X{x} $ the \emph{deformation gradient} (DG).

Several rotation-independent tensors have been introduced in the literature. Classical examples include the right Cauchy-Green deformation tensor:
\begin{equation}
\label{cgtensor}
CG = \|\nabla_X{x}\|^2 = (\nabla_X{x})^T\cdot\nabla_X{x}.
\end{equation}
\subsection{Cauchy-Green Deformation Tensor for Flows}
\label{sec:cgd}
\paragraph{Notation}
By $[x]$ we denote a product of intervals (a box), i.e. a compact and connected set $[x]\subset\mathbb{R}^n$. We will use the same notation for 
interval matrices.
By $\|\cdot\|_2$ we denote the \emph{Euclidean norm}, by $\|\cdot\|_\infty$ we denote the max norm, we use the same notation for the induced operator norms. 
Let $B(x,\rd )$ denote the closed ball centered at $x$ with the radius $\rd$. It will be clear from the context in which metric space we consider the ball.
By $\phi_{t_0}^{t_1}$ we denote the flow induced by \eqref{cauchy}, by $D_x\phi_{t_0}^{t_1}$ we denote the partial derivative in $x$ of the flow with respect to the initial condition, at time $t_1$, which we call the \emph{gradient of the flow}, also referred to as the \emph{sensitivity matrix} \cite{breach,donze}. 

Let us now relate the finite strain theory presented in Section~\ref{seclagrangian} to the study of flows induced by the ODE~\eqref{cauchy}.
For expressing deformation in time of a continuum we first consider the set of initial conditions (e.g. a ball), which is being evolved (deformed) in time by the flow $\phi$.
For the case of flows we have that the positions in Eulerian coordinates are coordinates of the initial condition (denoted here using lower case letters with subscript $0$, i.e. $x_0$, $y_0$).
The equivalent of $x(X,t)$ -- the Lagrangian coordinates of $X$ at time $t$ -- is $\phi_{t_0}^t(x_0)$. 
Obviously, the equivalent of $u(X,t)$ is $(\phi_{t_0}^t(x_0) - x_0)$, and the deformation 
gradient $\nabla_X{x}$ here is just the derivative of the flow with respect to the initial condition $D_x\phi_{t_0}^t$ (sensitivity matrix).

\paragraph{}
In this section we show that deformation tensors arise in a study of discrepancy of solutions of \eqref{cauchy}. First, we provide some basic lemmas that we use in the analysis of our reachtube-computation algorithm. 
We work with the metric spaces that are based on weighted norms. We aim at finding weights, such that the induced weighted norms provides to be smaller than those in the Euclidean norm for the computed gradients of solutions. This procedure is similar to finding appropriate induced \emph{matrix measures} (also known as \emph{logarithmic norms}), which provides rate of expansion between two solutions, see \cite{D,L,LWS,MA}.
\begin{definition}
\label{defMnorm}
Given positive-definite symmetric matrix $M\in\mathbb{R}^{n\times n}$ we define the \emph{$M$-norm} of $\mathbb{R}^n$ vectors  by
\begin{equation}
\label{Mnorm}
\|y\|_M = \sqrt{y^TMy}.
\end{equation}
Given the decomposition
\[
M = C^TC,
\]
the matrix norm induced by \eqref{Mnorm} is
\begin{equation}
\label{MnormMatrix}
  \|A\|_M = \sqrt{\lambda_{max}\left((C^T)^{-1}\cdot A^T\cdot M\cdot A\cdot C^{-1}\right)},
\end{equation}
where $\lambda_{max}(\cdot)$ denotes the maximal eigenvalue.

Observe that the square-root is well defined, as $\lambda_{max}\,{\geq}\,0$, because $M$ is symmetric semi-positive definite, and hence, the matrix $(C^T)^{-1}\cdot A^T\cdot M\cdot A\cdot C^{-1}$ is also symmetric semi-positive definite.
\end{definition}

\begin{lemma}
\label{lemdiscrep}
Consider the Cauchy problem \eqref{cauchy}. Let $x_0,y_0\in\mathbb{R}^n$ be two initial conditions at time $t_0$.
Let $M\in\mathbb{R}^{n\times n}$ be a positive-definite symmetric matrix and $C^TC = M$ be its decomposition. For $t_1 \geq t_0$, it holds that 
\begin{equation}
\label{Mestim}
  \|\phi_{t_0}^{t_1}(x_0){-}\phi_{t_0}^{t_1}(y_0)\|_M\,{\leq}\,\sqrt{\lambda_{max}\left((C^T)^{-1}D_x\phi_{t_0}^{t_1}(\xi)^T M\,D_x\phi_{t_0}^{t_1}(\xi) C^{-1}\right)}~\|x_0-y_0\|_M
\end{equation}
where $\xi = \omega x_0 + (1-\omega)y_0$ for some $\omega\in[0,1]$.
For the particular case of the Euclidean norm, \eqref{Mestim} takes the form
\begin{equation}
\label{estim}
\|\phi_{t_0}^{t_1}(x_0)-\phi_{t_0}^{t_1}(y_0)\|_2 \leq \sqrt{\lambda_{max}\left(\left(D_x\phi_{t_0}^{t_1}(\xi)\right)^T\cdot D_x\phi_{t_0}^{t_1}(\xi)\right)}~\|x_0-y_0\|_2.
\end{equation}
\end{lemma}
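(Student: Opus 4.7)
The plan is to derive \eqref{Mestim} as a weighted mean-value inequality for the flow $\phi_{t_0}^{t_1}$, viewed as a map on initial conditions. Under the smoothness assumption on $F$ in \eqref{cauchy}, the flow is $C^1$ in $x_0$ on some neighborhood of the segment $[y_0,x_0]$ (for $t_1$ close enough to $t_0$ that the flow exists on the segment, which we are implicitly assuming by referring to $\phi_{t_0}^{t_1}(x_0)$ and $\phi_{t_0}^{t_1}(y_0)$). The gradient $D_x\phi_{t_0}^{t_1}(x_0)$, which satisfies the variational equation along the trajectory through $x_0$, exists and depends continuously on $x_0$; I will only use this continuity, not the variational equation itself.

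First I would apply the fundamental theorem of calculus along the straight segment from $y_0$ to $x_0$:
\[
\phi_{t_0}^{t_1}(x_0) - \phi_{t_0}^{t_1}(y_0) = \int_0^1 D_x\phi_{t_0}^{t_1}\bigl(\omega x_0 + (1-\omega)y_0\bigr)\,(x_0 - y_0)\,d\omega.
\]
Taking the $M$-norm of both sides, passing the norm under the integral via the triangle inequality, and then bounding the integrand using the defining property $\|Av\|_M \leq \|A\|_M\,\|v\|_M$ of the induced matrix norm, I obtain
\[
\|\phi_{t_0}^{t_1}(x_0) - \phi_{t_0}^{t_1}(y_0)\|_M \leq \sup_{\omega \in [0,1]} \|D_x\phi_{t_0}^{t_1}(\omega x_0 + (1-\omega)y_0)\|_M \cdot \|x_0 - y_0\|_M.
\]
Since the segment is compact and the sensitivity matrix is continuous in its argument (hence so is its induced $M$-norm), the supremum is attained at some $\omega^* \in [0,1]$; taking $\xi = \omega^* x_0 + (1-\omega^*)y_0$ recovers the single $\xi$ appearing in the statement.

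To finish, I would substitute the explicit formula \eqref{MnormMatrix} with $A = D_x\phi_{t_0}^{t_1}(\xi)$, which reproduces exactly the square-root of the maximal eigenvalue on the right-hand side of \eqref{Mestim}. The Euclidean case \eqref{estim} is then immediate by specializing to $M = I$ and, consequently, $C = I$, which collapses the conjugation by $C^{-1}$ and $(C^T)^{-1}$.

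There is no genuine ``hard part'' here: the proof is a short chain of standard facts about smooth flows and induced norms. The only mildly subtle move is collapsing the supremum along the line segment to a single $\xi$, which is what forces one to work with a continuous (rather than merely integrable) sensitivity matrix; compactness of $[0,1]$ together with continuity of $x_0 \mapsto D_x\phi_{t_0}^{t_1}(x_0)$ makes this step automatic. Everything else is bookkeeping with the weighted norm \eqref{Mnorm} and its induced matrix norm \eqref{MnormMatrix}.
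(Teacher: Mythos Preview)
Your proof is correct and follows the same overall skeleton as the paper's: write the difference of flow values as the line integral $\int_0^1 D_x\phi_{t_0}^{t_1}(\omega x_0+(1-\omega)y_0)(x_0-y_0)\,d\omega$, then extract a single $\xi$ on the segment and invoke the formula \eqref{MnormMatrix} for the induced $M$-norm.

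The one genuine difference is in how the single $\xi$ is produced. The paper applies ``the well known mean value theorem for integrals'' to the matrix-valued integrand to assert the \emph{equality} $\phi_{t_0}^{t_1}(x_0)-\phi_{t_0}^{t_1}(y_0)=D_x\phi_{t_0}^{t_1}(\hat\xi)(x_0-y_0)$; but the mean value theorem for integrals does not hold for vector- or matrix-valued functions in general, so that step is not literally justified. Your route---take the $M$-norm first, bound by the supremum over $\omega\in[0,1]$, and then use continuity of $x\mapsto\|D_x\phi_{t_0}^{t_1}(x)\|_M$ on the compact segment to realize the supremum at some $\omega^*$---is the standard and rigorous way to get the same inequality with a single $\xi$. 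So your argument is not just an alternative; it is the clean fix for a soft spot in the paper's proof, at no extra cost.
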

%\begin{proof}
%
%Let $\xi(\omega) = \omega x_0 + (1 - \omega)y_0$.
%From 
%\[
%\int_0^1{D_x\phi_{t_0}^{t_1}( \xi(\omega) )\,d\omega} = \frac{1}{x_0-y_0}\left(\phi_{t_0}^{t_1}(x_0) - \phi_{t_0}^{t_1}(y_0)\right),
%\]
%and the well known mean value theorem for integrals, it holds that
%\begin{equation}
%\label{mvt}
%\phi_{t_0}^{t_1}(x_0)-\phi_{t_0}^{t_1}(y_0) = D_x\phi_{t_0}^{t_1}(\hat{\xi})(x_0-y_0)
%\end{equation}
%for some $\hat{\omega}\in[0,1]$, $\xi(\hat{\omega}) = \hat{\xi}$.
%We are interested in computing $\rd(T)$, i.e. an upper bound for the radius of the ball, such that it encloses all trajectories at time $T$
%\[
%\rd(T) \geq \sup_{y\in B(x_0,\rd(t_0))}{\|\phi_{t_0}^T(y) - \phi_{t_0}^T(x)\|}.
%\]
%From taking norms in \eqref{mvt} we obtain
%\[
%{\|\phi_{t_0}^{t_1}(x_0) - \phi_{t_0}^{t_1}(y_0)\|} = \|D_x\phi_{t_0}^{t_1}(\hat{\xi})(x_0-y_0)\|\leq \|D_x\phi_{t_0}^{t_1}(\hat{\xi})\|\|x_0-y_0\|.
%\]
%Replacing $\|D_x\phi_{t_0}^{t_1}(\hat{\xi})\|_2$ with the induced Euclidean matrix norm we obtain \eqref{estim}. If we use the weighted $M$-norm~\eqref{Mnorm}
%we have for the matrix norm \eqref{MnormMatrix}
%\begin{equation*}
%  \|\phi_{t_0}^{t_1}(x_0){-}\phi_{t_0}^{t_1}(y_0)\|_M\,{\leq}\,\sqrt{\lambda_{max}((C^T)^{-1}\left(D_x\phi_{t_0}^{t_1}(\hat{\xi})\right)^T 
%  M\,D_x\phi_{t_0}^{t_1}(\hat{\xi}) C^{-1})}~\|x_0-y_0\|_M
%\end{equation*}
%\qed
%\end{proof}
A proof can be found in Appendix~\ref{app:proofs}.
\begin{remark}
Let $\xi\in\mathbb{R}$ be a given vector. Observe that 
$
\left(D_x\phi_{t_0}^{t_1}(\xi)\right)^T\cdot D_x\phi_{t_0}^{t_1}(\xi)
$
appearing in \eqref{estim}
is the right \emph{Cauchy-Green deformation tensor} \eqref{cgtensor}
for two given initial vectors $x_0$ and $y_0$. We call the value $\sqrt{\lambda_{max}}$ appearing in \eqref{estim} \emph{Cauchy-Green stretching factor} for given initial vectors $x_0$ and $y_0$, 
which is necessarily positive as the CG deformation tensor is positive definite $\sqrt{\lambda_{max}(\cdot)} > 0$.
\end{remark}
Lemma~\ref{lemdiscrep}  is used when both of the discrepancy of the solutions at time $t_1$ as well as the initial conditions is measured in the same $M$-norm.
In the practical Lagrangian Reachtube Algorithm the norm used is changed during the computation. Hence we need another version of Lemma~\ref{lemdiscrep}, where the norm in which the discrepancy of the initial condition in measured differs from the norm in
which the discrepancy of the solutions at time $t_1$ is measured.
\begin{lemma}
\label{lemdiscrep2}
Consider the Cauchy problem \eqref{cauchy}. Let $x_0,y_0\in\mathbb{R}^n$ be two initial conditions at time $t_0$.
Let $M_0,M_1\in\mathbb{R}^{n\times n}$ be positive-definite symmetric matrices, and $C_0^TC_0 = M_0$, $C_1^TC_1 = M_1$ their decompositions
respectively. For $t_1 \geq t_0$, it holds that 
\begin{multline}
\label{cgm0m1}
\|\phi_{t_0}^{t_1}(x_0)-\phi_{t_0}^{t_1}(y_0)\|_{M_1}\\\leq \sqrt{\lambda_{max}\left((C_0^T)^{-1}\cdot\left(D_x\phi_{t_0}^{t_1}(\xi)\right)^T\cdot M_1\cdot D_x\phi_{t_0}^{t_1}(\xi)\cdot C_0^{-1}\right)}~\|x_0-y_0\|_{M_0},
\end{multline}
where $\xi = \omega x_0 + (1-\omega)y_0$ for some $\omega\in[0,1]$.
\end{lemma}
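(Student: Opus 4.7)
The plan is to follow the same template as the proof of Lemma~\ref{lemdiscrep} (which the authors defer to the appendix), but now carrying an asymmetric change of variables: on the initial-condition side we pull back by $C_0$, while on the time-$t_1$ side we keep $M_1$ intact. This will expose exactly the matrix that appears under the square root in \eqref{cgm0m1}.

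First, I would linearize the flow discrepancy. As in Lemma~\ref{lemdiscrep}, smoothness of $\phi_{t_0}^{t_1}$ gives a mean-value-theorem identity
\[
\phi_{t_0}^{t_1}(x_0)-\phi_{t_0}^{t_1}(y_0) = D_x\phi_{t_0}^{t_1}(\xi)\,(x_0-y_0),
\]
for some $\xi = \omega x_0+(1-\omega)y_0$, $\omega\in[0,1]$ (interpreted in the same sense as in Lemma~\ref{lemdiscrep}, i.e.\ via either an integral-form mean-value identity or a componentwise MVT combined with a worst-case supremum argument over $\mathrm{conv}\{x_0,y_0\}$). Squaring the target quantity in the $M_1$-norm and using Definition~\ref{defMnorm},
\[
\|\phi_{t_0}^{t_1}(x_0)-\phi_{t_0}^{t_1}(y_0)\|_{M_1}^2 = (x_0-y_0)^T(D_x\phi_{t_0}^{t_1}(\xi))^T M_1\,D_x\phi_{t_0}^{t_1}(\xi)(x_0-y_0).
\]

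Next I would reduce the $M_0$-norm to the Euclidean norm via the substitution $z := C_0(x_0-y_0)$, so that $x_0-y_0 = C_0^{-1}z$ and $\|x_0-y_0\|_{M_0}^2 = z^T(C_0^T)^{-1}M_0 C_0^{-1}z = z^Tz$, using $C_0^TC_0=M_0$. Inserting this into the previous display yields
\[
\|\phi_{t_0}^{t_1}(x_0)-\phi_{t_0}^{t_1}(y_0)\|_{M_1}^2 = z^T\bigl[(C_0^T)^{-1}(D_x\phi_{t_0}^{t_1}(\xi))^T M_1\,D_x\phi_{t_0}^{t_1}(\xi)\,C_0^{-1}\bigr]z.
\]
The bracketed matrix is symmetric and positive semi-definite (since $M_1$ is, and the outer factors are transposes of each other), so the Rayleigh-quotient bound $z^TAz\le\lambda_{\max}(A)\|z\|_2^2$ applies. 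Taking square roots and substituting back $\|z\|_2=\|x_0-y_0\|_{M_0}$ produces \eqref{cgm0m1}.

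The only conceptually nontrivial point is the vector-valued MVT step, and it is the same one that already has to be handled in the proof of Lemma~\ref{lemdiscrep}; everything else is a mechanical linear-algebra computation. In particular, the specialization $M_0=M_1$ collapses the bound back to Lemma~\ref{lemdiscrep}, which serves as a useful sanity check.
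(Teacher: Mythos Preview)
Your proposal is correct and follows essentially the same approach as the paper: the paper also starts from the mean-value identity established in Lemma~\ref{lemdiscrep}, writes $\|Aw\|_{M_1}^2=w^TA^TM_1Aw$, inserts $C_0^T(C_0^T)^{-1}$ and $C_0^{-1}C_0$ around $A^TM_1A$, and bounds the resulting quadratic form by its maximal eigenvalue times $w^TC_0^TC_0w=\|w\|_{M_0}^2$. Your substitution $z=C_0(x_0-y_0)$ is just a cosmetic repackaging of the same computation.
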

% \begin{proof}
% Let $\xi = \omega x_0 + (1-\omega)y_0$ for some $\omega\in[0,1]$. We use the equality $\phi_{t_0}^{t_1}(x_0)-\phi_{t_0}^{t_1}(y_0) = D_x\phi_{t_0}^{t_1}(\xi)(x_0-y_0)$ derived in the proof of Lemma~\ref{lemdiscrep}.
% %
% Let us denote $A:=D_x\phi_{t_0}^{t_1}(\xi)$, and $w = (x_0-y_0)$. It holds that
% \begin{multline*}
% \|\phi_{t_0}^{t_1}(x_0)-\phi_{t_0}^{t_1}(y_0)\|_{M_1} = \|Aw\|_{M_1}=\sqrt{(Aw)^TM_1(Aw)}=\sqrt{w^T(A^TM_1A)w}=\\
% \sqrt{w^TC_0^T((C_0^T)^{-1}A^TM_1AC_0^{-1})C_0w} \leq \sqrt{\lambda_{max}\left((C_0^T)^{-1}A^TM_1AC_0^{-1}\right)}\sqrt{w^TC_0^TC_0w}\\
% \hspace*{-5mm}=\sqrt{\lambda_{max}\left((C_0^T)^{-1}A^TM_1AC_0^{-1}\right)}~\|w\|_{M_0}\quad\qed
% \end{multline*}
% \end{proof}
%\bigskip
A proof can be found in Appendix~\ref{app:proofs}.
\begin{remark}
Given a positive-definite symmetric matrix $M$. We call the value appearing in \eqref{cgm0m1}
$
(C_0^T)^{-1}\cdot\left(D_x\phi_{t_0}^{t_1}(\xi)\right)^T\cdot M_1\cdot D_x\phi_{t_0}^{t_1}(\xi)\cdot C_0^{-1}
$
as the \emph{$M_0/M_1$-deformation tensor},
and the value $\sqrt{\lambda_{max}}$ as the $M_0/M_1$-\emph{stretching factor}.
\end{remark}
The idea behind using weighted norms in our approach is 
that the stretching factor in $M$-norm \eqref{cgm0m1} is expected to be smaller than that 
in the Euclidean norm \eqref{estim}.
Ultimately, this permits
a tighter reachtube computation, whose complete procedure is presented in Section~\ref{sec:rigcomp}.

%Finally, we obtain an upper bound for $\rd(T)$ -- the radius of the ball at time $T$  from the following inequalities
%\[
%\delta(T) \leq \sup_{y\in B(x,\rd(t_0))}{\|\phi_{t_0}^T(y) - \phi_{t_0}^T(x)\|_2}\leq \sqrt{\lambda_{max}([\Delta])}\|x-y\|_2\leq \sqrt{\lambda_{max}([\Delta])}\rd(t_0).
%\]
%\begin{remark}
%Observe that verifying $\sqrt{\lambda_{max}([\Delta])}<1$ shows that the radius of the ball at time $T$ is smaller that initially at time $t_0$,
%i.e. any pair of orbits with the i.c. within $B(x,\rd(t_0))$ gets attracted to each other at time $T$.
%This in turn means that the diameter of the reachtube in fact decreases in time, this is the property that could enable performing 
%a long-time integration. 
%\end{remark}

\section{Lagrangian Reachtube  Computation}
\label{sec:lrt}

\subsection{Reachtube Computation: Problem-Statement}
\label{sec:problem-statement}

In this section we provide first some lemmas that we then use to show that our method-and-algorithm produces a conservative output, in the sense that it encloses the set of solutions starting from a set of initial conditions. Precisely, we define what we mean by conservative enclosures. 
\begin{definition}
\label{defconserv}
Given an initial set $\mathcal{X}$, initial time $t_0$, and the target time $t_1\geq t_0$. We call the following compact sets:
\begin{itemize}
\item $\mathcal{W}\subset \mathbb{R}^n$ a \emph{conservative}, reach-set enclosure,  if $\phi_{t_0}^{t_1}(x)\in\mathcal{W}$ for all $x\in\mathcal{X}$.
\item $\mathcal{D}\subset \mathbb{R}^{n\times n}$ a \emph{conservative}, gradient enclosure, if $D_x\phi_{t_0}^{t_1}(x)\in\mathcal{D}$ for all $x\in\mathcal{X}$.
\end{itemize}
\end{definition}
\newcommand{\reach}{\mbox{Reach}}
\newcommand{\calX}{\mathcal{X}}
Following the notation used in \cite{FKXS}, and extending the corresponding definitions to our time variant setting, we introduce the notion of reachibility as follows: Given an initial set $\mathcal{X}\subset\mathbb{R}^n$ and a time $t_0$, we call a state $x$ in $\mathbb{R}^n$ as \emph{reachable} within a time interval $[t_1,t_2]$, if there exists an initial state $x_0\in\calX$ at time $t_0$ and a time  $t\in[t_1,t_2]$, such that $x=\phi_{t_0}^t(x_0)$. The set of all reachable states in the interval $[t_1,t_2]$ is called the \emph{reach set} and is denoted by $\reach((t_0,\calX),[t_1,t_2])$.

\begin{definition}[\cite{FKXS} Def.~2.4]
\label{defreachtube}
Given an initial set $\mathcal{X}$, initial time $t_0$, and a time bound $T$, a \emph{$((t_0,X), T)$-reachtube} of System~\eqref{cauchy} is a sequence of time-stamped sets 
$(R_1, t_1),\dots, (R_k, t_k)$ satisfying the following:
%\begin{enumerate}
 %\item 
 (1)~$t_0 \leq t_1 \leq \dots \leq t_k = T$,
  %\item 
  (2)~$\reach((t_0,\mathcal{X}), [t_{i-1}, t_i]) \subset R_i, \forall i = 1,\dots, k$.
%\end{enumerate}
\end{definition}

\begin{definition}
%Let $\mathcal{X}\subset\mathbb{R}^n$ be compact and connected, and $T>0$ be a time horizon.
%Let $\phi_{t_0}^t(x)$ be the solution of \eqref{cauchy} with the initial condition $(t_0,x)$ at time $t$, for $x\in\mathcal{X}$.
%
%We call the $(\mathcal{X},[t_0,T])$ \emph{reachtube over-approximation} of the flow defined by \eqref{cauchy} a time-parametrized set
%\[
%\mathcal{R}\colon[t_0,T]\to2^{\mathbb{R}^n},
%\]
%having the property that for all $t\in[t_0,T]$ and $x\in\mathcal{X}$ it holds that
%\[
%\phi_{t_0}^t(x)\in \mathcal{R}(t),
%\]
%where  $\mathcal{R}(t)$ is the cross-section of the reachtube over-approximation, in particular $\mathcal{R}({t_0})=\mathcal{X}$.
%We will also use the notation
%\[
%\mathcal{R}([t_1,t_2]),
%\]
%to denote the segment of the reachtube over-approximation for the time interval $[t_1,t_2]\subset\mathbb{R}.$.
%
Whenever the initial set $\mathcal{X}$ and the time horizon $T$ are known from the context we will skip the $(\mathcal{X},[t_0,T])$ part,
and will simply use the name (conservative) \emph{reachtube over-approximation} of the flow defined by \eqref{cauchy}.
\end{definition}
%
%\begin{definition}
%Let $\mathcal{X}\subset\mathbb{R}^n$ compact and connected, $T>0$ be a time horizon.
%Let $\phi_{t_0}^t(x)$ be the solution of \eqref{cauchy} with the initial condition $(t_0,x)$ at time $t$ for $x\in\mathcal{X}$.
%Let $D_x\phi_{t_0}^t(x)$ denote the gradient of the flow defined by \eqref{cauchy}, with the initial condition $(t_0,x)$.
%
%We call the $(\mathcal{X},[t_0,T])$ \emph{$C^1$ reachtube over-approximation} of the gradient of the flow defined by \eqref{cauchy} a time-parametrized set
%\[
%\mathcal{DR}\colon[t_0,T]\to 2^{\mathbb{R}^{n\times n}},
%\]
%having the property that for all $t\in[t_0,T]$ and $x\in\mathcal{X}$ it holds that
%\[
%D_x\phi_{t_0}^t(x)\in \mathcal{DR}(t),
%\]
%where $\mathcal{DR}(t)$ is the cross-section of the $C^1$ reachtube over-approximation, in particular 
%$\mathcal{DR}({t_0}) = \{Id\}$. 
%We will also use the notation $\mathcal{DR}({[t_1,t_2]})$, as in the previous definition,
%to denote the segment of the reachtube over-approximation for the time interval $[t_1,t_2]\subset\mathbb{R}.$.
%
%Whenever the initial set $\mathcal{X}$ and the time horizon $T$ is known from the context we will skip $(\mathcal{X},[t_0,T])$ part,
%and use simply the name $C^1$ \emph{reachtube over-approximation} of the flow defined by \eqref{cauchy}.
%\end{definition}
%
Observe that we do not address here the question what is the exact structure of the solution set at time $t$ initiating at $\mathcal{X}$. In general it could have, for instance, a fractal structure. We aim at constructing an over-approximation for the solution set, and its gradient,
which is amenable for rigorous numeric computations.

In the theorems below we show that the method presented in this paper can be used to construct a reachtube over-approximation $\mathcal{R}$.
The theorems below is the foundation of our novel \emph{Lagrangian Reachtube Algorithm (LRT)} presented in Section~\ref{sec:rigcomp}.
In particular, the theorems below provide estimates we use for constructing  $\mathcal{R}$. First, we present a theorem for the discrete case.
\subsection{Conservative Reachtube Construction}
\label{sec:crt}

\begin{theorem}
\label{thmmaindiscrete}
Let $t_0\leq t_1$ be two time points.
Let $\phi_{t_0}^{t_1}(x)$ be the solution of \eqref{cauchy} with the initial condition $(t_0,x)$ at time $t_1$,
let $D_x\phi_{t_0}^{t_1}$ be the gradient of the flow.
Let $M_0,M_1\in\mathbb{R}^{n\times n}$ be positive-definite symmetric matrices, and $C_0^TC_0 = M_0$, $C_1^TC_1 = M_1$ be their decompositions
respectively.
Let $\mathcal{X} = B_{M_0}(x_0,\delta_0)\subset\mathbb{R}^n$ be a set of initial states for the Cauchy problem \eqref{cauchy}
(ball in $M_0$-norm with the center at $x_0$, and radius $\delta_0$). Assume that there exists a compact, 
conservative enclosure $\mathcal{D}\subset\mathbb{R}^{n\times n}$ for the gradients, such that:
\begin{equation}
\label{D}
D_x\phi_{t_0}^{t_1}(x)\in \mathcal{D}\text{ for all }x\in\mathcal{X}.
\end{equation}

Suppose $\Lambda>0$ is  such that:
\begin{equation}
\label{lambdageq}
\Lambda \geq \sqrt{ \lambda_{max}\left( (C_0^T)^{-1}D^TM_1DC^{-1}_0 \right) },\text{ for all }D\in\mathcal{D}.
\end{equation}

Then it holds that:
\[
\phi_{t_0}^{t_1}(x)\in B_{M_1}(\phi_{t_0}^{t_1}(x_0), \Lambda \cdot \delta_0).
\]

%OLD VERSION
%Let $\mathcal{X} = B(x_0,\delta_0)\subset\mathbb{R}^n$ be a set of initial conditions for the Cauchy problem \eqref{cauchy},
%$x\in\mathcal{X}$, $t\in [t_0,t_1]$. 
%Let $\phi_{t_0}^t(x)$ be the solution of \eqref{cauchy} with the initial condition $(t_0,x)$ at time $t\in[t_0,t_1]$. 
%Assume the first segment of a \emph{$C^1$ reachtube over-approximation} of the gradient of the flow defined by \eqref{cauchy} is given by 
%$\mathcal{DR}({[t_0,t_1]})$.
%
%et 
%\begin{gather*}
%\Delta\colon[t_0,t_1]\to 2^{\mathbb{R}^{n\times n}},\\
%\Delta(t) := \left\{ g^*\cdot g,\text{ for all }g\in\mathcal{DR}({t})\right\},\text{ for all times }t\in[t_0,t_1]
%\end{gather*}
% be the set of Cauchy-Green deformation tensors computed for all gradients contained in 
%$\mathcal{DR}({[t_0,t_1]})$.
%
%Assume there exists $\Lambda\colon[t_0,t_1]\to\mathbb{R}_+$, such that for all $t\in[t_0,t_1]$ it holds that
%\[
%\lambda_{max}{(\hat{\Delta})}  \leq \Lambda(t)\text{ for all }\hat{\Delta} \in \Delta(t).
%\]
%
%Then, for any $x\in\mathcal{X}$, $t\in[t_0,t_1]$ it holds that
%\[
%\phi_{t_0}^t(x) \in B\left(\phi_{t_0}^t(x_0), \delta_0\sqrt{\Lambda(t)}\right),
%\]
%Hence, $\mathcal{R}({[t_0,t_1]})$ -- the first segment of reachtube over-approximation is given by 
%\begin{equation*}
%mathcal{R}({t}) := B\left(\phi_{0}^{t}(x_0), \delta_0\sqrt{\Lambda(t)}\right)\text{ for }t\in[t_0,t_{1}].
%\end{equation*}
%and, for any $x,y\in\mathcal{X}$
%\[
%\|\phi_{t_0}^t(x) - \phi_{t_0}^t(y)\| \leq e^{\log{\delta_0\sqrt{\lambda_{max}(t)}}}\|x-y\|.
%\]
\end{theorem}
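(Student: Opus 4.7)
The plan is to reduce the statement directly to Lemma~\ref{lemdiscrep2}, exploiting the convexity of the $M_0$-ball together with the conservative-enclosure hypothesis on the gradient.

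First I would fix an arbitrary $y\in\mathcal{X}=B_{M_0}(x_0,\delta_0)$, so that by definition $\|x_0-y\|_{M_0}\leq\delta_0$. The goal is then to show $\|\phi_{t_0}^{t_1}(x_0)-\phi_{t_0}^{t_1}(y)\|_{M_1}\leq\Lambda\cdot\delta_0$, since this is exactly the statement $\phi_{t_0}^{t_1}(y)\in B_{M_1}(\phi_{t_0}^{t_1}(x_0),\Lambda\delta_0)$ (and $y$ was arbitrary in $\mathcal{X}$).

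Next, I would apply Lemma~\ref{lemdiscrep2} with the chosen $x_0$, $y$, and the matrices $M_0$, $M_1$ and their factors $C_0$, $C_1$. This yields
\[
\|\phi_{t_0}^{t_1}(x_0)-\phi_{t_0}^{t_1}(y)\|_{M_1}\leq\sqrt{\lambda_{max}\bigl((C_0^T)^{-1}D_x\phi_{t_0}^{t_1}(\xi)^T M_1\,D_x\phi_{t_0}^{t_1}(\xi)\,C_0^{-1}\bigr)}\;\|x_0-y\|_{M_0},
\]
for some $\xi = \omega x_0 + (1-\omega) y$ with $\omega\in[0,1]$.

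The crucial step (and the only thing that is not just unpacking a hypothesis) is to argue that the intermediate point $\xi$ lies in $\mathcal{X}$. Since $\mathcal{X}$ is a ball in the $M_0$-norm, and the $M_0$-norm is a genuine norm on $\mathbb{R}^n$ (because $M_0$ is positive definite and symmetric, making $\|\cdot\|_{M_0}$ the norm induced by the inner product $\langle u,v\rangle_{M_0}=u^T M_0 v$), the set $\mathcal{X}$ is convex. Thus the segment joining $x_0$ and $y$ remains entirely inside $\mathcal{X}$, and in particular $\xi\in\mathcal{X}$. The conservative-enclosure assumption~\eqref{D} then guarantees $D_x\phi_{t_0}^{t_1}(\xi)\in\mathcal{D}$, and hypothesis~\eqref{lambdageq} applied to $D=D_x\phi_{t_0}^{t_1}(\xi)$ bounds the square-root term by $\Lambda$. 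Combining this with $\|x_0-y\|_{M_0}\leq\delta_0$ gives $\|\phi_{t_0}^{t_1}(x_0)-\phi_{t_0}^{t_1}(y)\|_{M_1}\leq\Lambda\,\delta_0$, which is exactly the desired inclusion.

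I do not foresee any real obstacle: the entire argument is a sequence of bookkeeping steps once Lemma~\ref{lemdiscrep2} is in hand. The only subtlety worth flagging explicitly is the convexity of the $M_0$-ball, which is what lets the mean-value point $\xi$ inherit membership in $\mathcal{X}$ and thereby activate the conservative-enclosure hypothesis on $\mathcal{D}$.
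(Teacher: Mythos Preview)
Your proposal is correct and follows essentially the same route as the paper's own proof: pick an arbitrary point in $\mathcal{X}$, apply Lemma~\ref{lemdiscrep2}, note that the mean-value point $\xi$ lies in the convex ball $B_{M_0}(x_0,\delta_0)$ so that $D_x\phi_{t_0}^{t_1}(\xi)\in\mathcal{D}$, and then invoke~\eqref{lambdageq}. The only difference is cosmetic---you spell out the convexity argument explicitly, whereas the paper simply asserts ``Obviously, $\xi\in B_{M_0}(x_0,\delta_0)$.''
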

\begin{proof}
Let $x_0$ be the center of the ball of initial conditions $\mathcal{X} = B_{M_0}(x_0,\delta_0)$, and
let us pick $x \in \mathcal{X}$. From Lemma~\ref{lemdiscrep2} the discrepancy of the solutions
initiating at $x_0$ and $x$ at time $t_1$ is bounded in $M_1$-norm by:
\[
\|\phi_{t_0}^{t_1}(x_0)-\phi_{t_0}^{t_1}(x)\|_{M_1} \leq \delta_0\sqrt{\lambda_{max}\left((C_0^T)^{-1}\cdot\left(D_x\phi_{t_0}^{t_1}(\xi)\right)^T\cdot M_1\cdot D_x\phi_{t_0}^{t_1}(\xi)\cdot C_0^{-1}\right)},
\]
where $\xi = \omega x_0 + (1-\omega)x$ for some $\omega\in[0,1]$. Obviously, $\xi\in B_{M_0}(x_0,\delta_0)$.
Hence, 
%\[
$D_x\phi_{t_0}^{t_1}(\xi)\in\mathcal{D}.
$
%\]
Moreover, if $\Lambda>0$ satisfies \eqref{lambdageq}, then
\[
\Lambda \geq \sqrt{\lambda_{max}\left((C_0^T)^{-1}\cdot\left(D_x\phi_{t_0}^{t_1}(\xi)\right)^T\cdot M_1\cdot D_x\phi_{t_0}^{t_1}(\xi)\cdot C_0^{-1}\right)},
\]
and
%\[
$\phi_{t_0}^{t_1}(x) \in B_{M_1}(\phi_{t_0}^{t_1}(x_0), \Lambda \delta_0).
$
%\]
As $x$ was chosen in an arbitrary way, we are done.
\qed\end{proof}
\bigskip

The next theorem is the variant of Theorem~\ref{thmmaindiscrete} for obtaining a continuous reachtube.
\begin{theorem}
\label{thmmaincont}
Let $\phi_{t_0}^{t_1}(x)$ be the solution of \eqref{cauchy} with the initial condition $(t_0,x)$ at time $t_1$,
let $D_x\phi_{t_0}^{t_1}$ be the gradient of the flow.
Let $M_0,M_1\in\mathbb{R}^{n\times n}$ be positive definite symmetric matrices, and $C_0^TC_0 = M_0$, $C_1^TC_1 = M_1$ be their decompositions
respectively.
Let $\mathcal{X} = B_{M_0}(x_0,\delta_0)\subset\mathbb{R}^n$ be a set of initial conditions for the Cauchy problem \eqref{cauchy}
(ball in $M_0$-norm). Assume that there exists $\{\mathcal{D}_t\}_{t\in[t_0,t_1]}$ -- a compact $t$-parametrized set, 
such that
\begin{gather*}
\mathcal{D}_t\subset\mathbb{R}^{n\times n}\text{ for }t\in[t_0,t_1],\\
D_x\phi_{t_0}^{t}(x)\in \mathcal{D}_{t}\text{ for all }x\in\mathcal{X}\text{, and }t\in[t_0,t_1].
\end{gather*}

If $\Lambda > 0$ is such that 
%\begin{gather*}
$\text{for all }D\colon D\in\mathcal{D}_t\text{ for some }t\in[t_0,t_1]$
\[
\Lambda \geq \sqrt{ \lambda_{max}\left( (C_0^T)^{-1}D^TM_1DC^{-1}_0 \right) }
\]
%\end{gather*}

Then for all $t\in[t_0,t_1]$ it holds that
\begin{equation}
\label{reachset}
\phi_{t_0}^{t}(x)\in B_{M_1}(\phi_{t_0}^{t}(x_0), \Lambda \cdot \delta_0).
\end{equation}
\end{theorem}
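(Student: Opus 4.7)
The plan is to reduce Theorem~\ref{thmmaincont} to the discrete-time version in Theorem~\ref{thmmaindiscrete} (equivalently, to a direct application of Lemma~\ref{lemdiscrep2}) by treating each intermediate time $t\in[t_0,t_1]$ as a separate target time. The hypothesis on $\Lambda$ is deliberately strengthened from the discrete case: it is taken uniform over the \emph{union} $\bigcup_{t\in[t_0,t_1]}\mathcal{D}_t$, and this uniformity is exactly what lets a single scalar $\Lambda$ work for every $t$ in the interval.

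Concretely, I would fix an arbitrary $t\in[t_0,t_1]$ and an arbitrary initial state $x\in\mathcal{X}=B_{M_0}(x_0,\delta_0)$, and then invoke Lemma~\ref{lemdiscrep2} with the target time $t$ in place of $t_1$. This produces some intermediate point $\xi=\omega x_0+(1-\omega)x$, $\omega\in[0,1]$, satisfying
\[
\|\phi_{t_0}^{t}(x_0)-\phi_{t_0}^{t}(x)\|_{M_1}\leq \sqrt{\lambda_{max}\left((C_0^T)^{-1}\left(D_x\phi_{t_0}^{t}(\xi)\right)^T M_1\, D_x\phi_{t_0}^{t}(\xi)\, C_0^{-1}\right)}\,\|x_0-x\|_{M_0}.
\]
Since $M_0$-balls are convex, $\xi\in B_{M_0}(x_0,\delta_0)=\mathcal{X}$, so by the conservativeness hypothesis on $\{\mathcal{D}_t\}$ we get $D_x\phi_{t_0}^{t}(\xi)\in\mathcal{D}_t$. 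This $D=D_x\phi_{t_0}^t(\xi)$ is one of the matrices in $\bigcup_{s\in[t_0,t_1]}\mathcal{D}_s$ over which the $\Lambda$-bound was assumed, so the square-root factor is dominated by $\Lambda$. Combined with $\|x_0-x\|_{M_0}\leq\delta_0$, this yields $\|\phi_{t_0}^{t}(x_0)-\phi_{t_0}^{t}(x)\|_{M_1}\leq \Lambda\,\delta_0$, i.e.\ \eqref{reachset}. Since $x$ and $t$ were arbitrary, the containment holds for all $t\in[t_0,t_1]$ and all $x\in\mathcal{X}$.

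There is no real obstacle here; the content of the theorem is essentially that uniformity of $\Lambda$ over $\bigcup_t \mathcal{D}_t$ converts a pointwise-in-$t$ application of Lemma~\ref{lemdiscrep2} into a uniform-in-$t$ enclosure. The only subtlety worth flagging explicitly is the convexity of $M_0$-balls (needed to guarantee $\xi\in\mathcal{X}$); this follows immediately because $\|\cdot\|_{M_0}$ is a norm. No additional regularity of $t\mapsto\mathcal{D}_t$ is required, because the argument never interpolates between different $t$-slices; it only uses membership in a single slice at the (arbitrary) time $t$ under consideration.
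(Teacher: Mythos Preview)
Your proposal is correct and follows essentially the same approach as the paper: the paper's proof simply states that the result follows from the proof of Theorem~\ref{thmmaindiscrete} applied to all times $t\in[t_0,t_1]$, which is exactly what you have unpacked in detail. Your explicit remarks about the convexity of $M_0$-balls and the uniformity of $\Lambda$ over $\bigcup_t\mathcal{D}_t$ are useful clarifications that the paper leaves implicit.
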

\label{thmcontrt}
\begin{proof} It follows from the proof of Theorem~\ref{thmmaindiscrete} applied to all times $t\in[t_0,t_1]$.
\qed\end{proof}
\begin{corollary}
Let $T \geq t_0$. 
Assume that that there exists $\{\mathcal{D}_t\}_{t\in[t_0,T]}$ -- a compact $t$-parametrized set, 
such that
\begin{gather*}
\mathcal{D}_t\subset\mathbb{R}^{n\times n}\text{ for }t\in[t_0,T],\\
D_x\phi_{t_0}^{t}(x)\in \mathcal{D}_{t}\text{ for all }x\in\mathcal{X}\text{, and }t\in[t_0,T].
\end{gather*}

Then the existence of a  $((t_0,X), T)$-reachtube of the system described in Eq. \eqref{cauchy} 
in sense of Definition~\ref{defreachtube}, i.e. a sequence of time-stamped sets 
$(R_1, t_1),\dots, (R_k, t_k)$ is provided by an application of 
Lemma~\ref{thmmaincont}.
We provide an algorithm computing the reachtube in Section~\ref{sec:rigcomp}.
\end{corollary}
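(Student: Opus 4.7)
The plan is to apply Theorem~\ref{thmmaincont} a single time to the whole interval $[t_0,T]$, obtaining a time-varying ball that encloses the flow, and then to slice this enclosure along any chosen partition of $[t_0,T]$ to produce the sequence of time-stamped sets required by Definition~\ref{defreachtube}.

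More explicitly, I would fix any positive-definite symmetric matrix $M_1\in\mathbb{R}^{n\times n}$ with decomposition $C_1^TC_1=M_1$ (the identity is the simplest choice; optimizing this matrix is precisely what LRT exploits in Section~\ref{sec:rigcomp}). Using that $[t_0,T]$ is compact and that each slice $\mathcal{D}_t$ is compact with $t\mapsto\mathcal{D}_t$ varying continuously (a consequence of the smoothness of $F$ in \eqref{cauchy}, which makes the sensitivity matrix depend continuously on $(t,x)$), the function
\[
(t,D)\,\longmapsto\,\sqrt{\lambda_{max}\!\left((C_0^T)^{-1}D^TM_1DC_0^{-1}\right)}
\]
attains a finite maximum over $\bigcup_{t\in[t_0,T]}\{t\}\times\mathcal{D}_t$. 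Taking $\Lambda$ to be this maximum and invoking Theorem~\ref{thmmaincont} with $t_1=T$ yields
\[
\phi_{t_0}^t(x)\in B_{M_1}(\phi_{t_0}^t(x_0),\Lambda\,\delta_0)\quad\text{for all }x\in\mathcal{X}\text{ and all }t\in[t_0,T].
\]

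To deliver the reachtube, I would then pick any finite partition $t_0=s_0<s_1<\dots<s_k=T$ and define
\[
R_i\,=\,\bigcup_{t\in[s_{i-1},s_i]}\overline{B_{M_1}(\phi_{t_0}^t(x_0),\Lambda\,\delta_0)},\quad i=1,\dots,k.
\]
Each $R_i$ is compact as the continuous image of $[s_{i-1},s_i]\times\overline{B_{M_1}(0,\Lambda\delta_0)}$ under the map $(t,v)\mapsto\phi_{t_0}^t(x_0)+v$. The inclusion $\reach((t_0,\mathcal{X}),[s_{i-1},s_i])\subset R_i$ is immediate from the pointwise bound of the previous paragraph combined with the definition of the reach set, and $t_0\leq s_1\leq\dots\leq s_k=T$ by construction, so $(R_1,s_1),\dots,(R_k,s_k)$ satisfies all conditions of Definition~\ref{defreachtube}.

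The main obstacle is the existence and finiteness of the uniform stretching factor $\Lambda$. The corollary's hypothesis only declares each slice $\mathcal{D}_t$ compact, so a uniform bound on the stretching expression requires some joint regularity in $t$. I would discharge this by invoking standard ODE theory: smoothness of $F$ makes $D_x\phi_{t_0}^t(x)$ jointly continuous in $(t,x)$ on the compact set $[t_0,T]\times\overline{\mathcal{X}}$, so its image is compact and any reasonable family $\mathcal{D}_t$ of enclosures may be taken so that the graph $\bigcup_t\{t\}\times\mathcal{D}_t$ is compact, making the supremum attained and finite. Alternatively one may sidestep this regularity question by applying Theorem~\ref{thmmaindiscrete} and Theorem~\ref{thmmaincont} separately on each subinterval $[s_{i-1},s_i]$ with a per-interval stretching factor $\Lambda_i$, which is precisely the iterative template realized by the algorithm in Section~\ref{sec:rigcomp}.
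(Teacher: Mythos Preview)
Your proposal is correct and follows essentially the same template as the paper: each $R_i$ is taken to be the union over $t\in[t_{i-1},t_i]$ of the $M_1$-balls supplied by Theorem~\ref{thmmaincont}. The paper's proof is terser and applies Theorem~\ref{thmmaincont} separately on each subinterval $[t_{j-1},t_j]$, explicitly allowing the norm $M_j$ to change from segment to segment (previewing the iterative structure of the LRT algorithm in Section~\ref{sec:rigcomp}), whereas your primary route applies the theorem once on all of $[t_0,T]$ with a single global $\Lambda$ and $M_1$ and then slices. The per-interval alternative you sketch at the end---with per-segment stretching factors $\Lambda_i$---is precisely the paper's presentation. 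You are also more careful than the paper in arguing that a finite $\Lambda$ exists; the paper simply calls the application ``immediate'' and does not address this point.
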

\begin{proof} Immediate application of Theorem~\ref{thmmaincont} shows that if the first segment $(R_1, t_1)$ 
is defined
\[
(R_1, t_1) := \bigcup_{t\in[t_0,t_1]}{B_{M_1}(\phi_{t_0}^{t}(x_0), \Lambda \cdot \delta_0)},
\]
then it satisfies
\[
\reach((t_0,\mathcal{X}), [t_{0}, t_1]) \subset (R_1, t_1),
\]
which is exactly provided by \eqref{reachset}.
The $j$-th segment $(R_j, t_j)$ for $j = 2,\dots,k$ is obtained by replacing in Theorem~\ref{thmmaincont} the time interval $[t_0,t_1]$
with the interval $[t_{j-1}, t_j]$ (Observe that the norm may be different in each step).
\qed\end{proof}
\subsection{LRT: A Rigorous Lagrangian Computation Algorithm}
\label{sec:rigcomp}
We now present a complete description of our algorithm. In the next section %subsection~\ref{secalgcorrect} 
we prove its correctness.
First let us comment on how in practice we compute a representable enclosure for the gradients \eqref{D}. 

\begin{itemize}
\vspace*{-1mm}\item First, given an initial 
ball $B_{M_0}(x_0,\delta_0)$ we compute its representable over-approximation, i.e.,~a product of intervals (a box in canonical coordinates $[X]\subset\mathbb{R}^n$), such that 
%\[
$
B_{M_0}(x_0,\delta_0) \subset [X]
$.
%\]
%
\vspace*{2mm}\item Next, using the \emph{$C^1$-CAPD algorithm} \cite{Z1,ZW1}
all trajectories initiating in $[X]$ are rigorously propagated forward in time, in order to compute a conservative enclosure for 
%\[
$
\left\{D_x\phi_{t_0}^{t_1}(\xi)|\ \xi\in [X]\right\},
$
%\]
%
%\paragraph{Notation} 
the gradients. We  use the notation
%\begin{equation}
%\label{gradenc}
$
[D_x\phi_{t_0}^{t_1}([X])] \subset \mathbb{R}^{n\times n},
$
%\end{equation}
to denote a representable enclosure (an interval matrix) for the set of gradients
%\[
$
\left\{D_x\phi_{t_0}^{t_1}(\xi)|\ \xi\in [X]\right\}.
$
%\]
\end{itemize}

\vspace*{-1mm}The norm of an interval vector $\|[x]\|$ is defined as the supremum of the norms of all vectors within the bounds $[x]$. 
For an interval set $[x]\subset\mathbb{R}^n$ we denote by $\mathbb{R}^n\supset B_{M}([x], r):= \bigcup_{x\in[x]}{B_M(x, r)}$, the union of the balls in $M$-norm 
of radius $r$ having the center in $[x]$.
Each product of two interval matrices is overestimated by using the interval-arithmetic operations.

%We are going to use the following notation for interval matrix enclosures of the Cauchy-Green 
%deformation tensors based on the gradient enclosure \eqref{gradenc}, computed in Euclidean and the $M$-norm respectively.
%\[
%[\Delta] = \left[ [D_x\phi_{t_0}^{t_1}([X])\right]^T\cdot \left[D_x\phi_{t_0}^{t_1}([X])\right],
%\]
%and equivalently
%\[
%[\Delta_M] = \left[(C^T)^{-1}\cdot\left[D_x\phi_{t_0}^{t_1}([X])\right]^T\cdot M\cdot \left[D_x\phi_{t_0}^{t_1}([X])\right]\cdot C^{-1}\right].
%\]
%
\begin{definition}
We will call the $C^1$-CAPD algorithm the \emph{rigorous tool}, which is currently used to generate conservative enclosures for the gradient  in the LRT.
\end{definition}

The output of the LRT are 
discrete-time reachtube over-approxima\-tion cross-sections 
 $\{t_0,t_1,\dots,t_k\}$, $t_k\,{=}\,t_0\,{+}\,kT$, i.e., reachtube over-approxima\-tions $B_{M_j}(x_j, \delta_j)$ of the flow induced by \eqref{cauchy} at time $t_j$. We note that the algorithm can be easily modified to provide a validated bounds for the finite-time Laypunov exponent.
We use the discrete-time output for the sake of comparison.  However, as a byproduct, a continuous reachtube
over-approximation is obtained by means of \emph{rough enclosures} (Fig.~\ref{enclosure}) produced by the rigorous tool used and by applying Theorem~\ref{thmmaincont}.
The implementation details of the algorithm can be found in Section~\ref{secimpldet}.
\begin{figure}
\vspace*{-3ex}
\begin{center}
\includegraphics[width=0.65\textwidth]{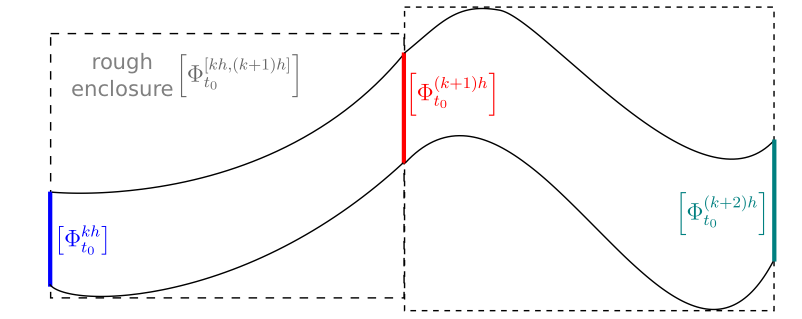}
\end{center}
\vspace*{-2ex}
\caption{Fine bounds provided at equally spaced time steps (colored), and coarse bounds provided for the intermediate times by the rough enclosure (dotted boxes).}
\vspace*{-3ex}
\label{enclosure}
\end{figure}
%
%We present rough enclosures graphically  on Figure~\ref{enclosure}. As rough enclosures are enclosures valid
%over whole time intervals, they necessary contain enclosures at time steps $\{t_0,t_1,\dots,t_k\}$. 
%Still, the algorithms implemented by the rigorous tools are optimized such that the enclosures at time steps
%are tighter (they are valid only for the given time step) contrary to the rough enclosures (valid 
%over whole time intervals). 
%

We are now ready to give the formal description of the LRT: (1)~its inputs, (2)~its outputs, and (3)~its computation.\\[2ex]

\noindent{\bf LRT: The Lagrangian Reachtube Algorithm}\\[1ex]
{\bf Input:}
\begin{itemize}
\vspace*{-1mm}\item \emph{ODE}s~\eqref{cauchy}: time-variant ordinary differential equations,
\item $T$: time horizon, $t_0$: initial time, 
%(for TI equations $t_0 = 0$), 
$k$: number of steps, $h\,{=}\,T{/}k$: time step
\item $M_0$: initial positive-definite symmetric  matrix defining the norm \eqref{Mnorm}, 
\item $[x_0]\subset \mathbb{R}^n$: the initial bounds for the position of the center of the ball at $t_0$, %(can be an interval),
\item $\rd_0 > 0$: the radius of the ball (in $M_0$ norm) about $x_0$ at $t_0$.\\[-4mm]
\end{itemize}
\noindent{\bf Output:}
\begin{itemize}
\vspace*{-1mm}\item $\{ [x_j] \}_{j=1}^k\,{\subset}\,\mathbb{R}^{n\times{k}}$: interval enclosures for ball centers $x_j$ at time 
$t_0\,{+}\,jh$, 
% radu: does not make sense?? where $t_k=t_0 + T$,
\item $\{ M_j \}_{j=1}^k$: norms 
%(see Definition~\ref{defMnorm}) 
defining metric spaces for the ball enclosures,
\item $\{ \rd_j \}_{j=1}^k\,{\in}\,\mathbb{R}_+^{k}$: radii of the ball enclosures at $x_j$ for $j = 1,\dots,k$. \footnote{Observe that the radius is valid for the $M_j$ norm,
$B_{M_j}([x_j],\delta_j)\subset\mathbb{R}^n$ for $j = 1,\dots,k$ is a conservative output, i.e.
$B_{M_j}([x_j],\delta_j)$ is an over-approximation for the set of states reachable at time $t_1$ starting from any state $(t_0,x)$, such that $x\in \mathcal{X}$
\[
\reach((t_0,\mathcal{X}), t_j) \subset 
B_{M_j}([x_j], \delta_j)\text{, for } j = 1,\dots,k.
\]}
\end{itemize}
\vspace*{1mm}\noindent{\bf Begin LRT}
\begin{enumerate}
\vspace*{-1mm}\item Set $t_1\,{=}\,t_0\,{+}\,h$.
Propagate the center of the ball $[x_0]$ forward in time by the time-step $h$, using the rigorous tool. The result is
a conservative enclosure for the solutions $[\phi_{t_0}^{t_1}([x_0])]$ and the gradients $[D_x\phi_{t_0}^{t_1}([x_0])]$.
\item Choose a matrix $D\in [D_x\phi_{t_0}^{t_1}([x_0])]$,
and compute a symmetric positive-definite matrix $M_1$, and its decomposition $M_1 = C_1^TC_1$, such that, it
minimizes the stretching factor for $D$. In other words, it holds that:
\begin{equation}
\label{Mopt}
\sqrt{\lambda_{max}\left( (C_1^T)^{-1}D^TM_1DC_1^{-1} \right)} \leq 
\sqrt{\lambda_{max}\left( (\widetilde{C}^T)^{-1}D^T\widetilde{M}D\widetilde{C}^{-1} \right)},
\end{equation}
for all positive-definite symmetric matrices $\widetilde{M}$. In the actual code we find the minimum with some resolution, i.e.,~we compute  $M_1$, such that it is close to the optimal in the sense of \eqref{Mopt},
using the procedure presented in subsection~\ref{secoptnorm}.
\item Decide whether to change the norm of the ball enclosure from $M_0$ to $M_1$ (if it leads to a smaller stretching factor). If the norm is to be changed keep $M_1$ as it is, otherwise $M_1 = M_0$,
\item Compute an over-approximation for $B_{M_0}([x_0],\delta_0)$, which is representable in the rigorous tool employed by the LRT, and can be used as input to propagate forward in time all solutions initiating in $B_{M_0}([x_0],\delta_0)$. This is a product of intervals in canonical coordinates $[X]\subset\mathbb{R}^n$, such that: 
\[
B_{M_0}([x_0],\delta_0) \subset [X].
\]
We compute the over-approximation using the interval arithmetic expression:
\[
{C_0^{-1}(C_0\cdot[x_0] + [-\delta_0,\delta_0]^n)}
\]
\item Rigorously propagate $[X]$ forward in time, using the rigorous tool over the time interval $[t_0, t_1]$. The result is a continuous reachtube, providing bounds for $[\phi_{t_0}^t([X])]$, and $[D_x\phi_{t_0}^t([X])]$ for all $t\in[t_0, t_1]$. We employ an integration algorithm with a fixed time-step $h$. As a consequence \emph{"fine"} bounds are obtained for $t=t_1$.
We denote those bounds by: 
\[
[\phi_{t_0}^{t_1}([X])], \quad [D_x\phi_{t_0}^{t_1}([X])].
\]
We remark that for the intermediate time-bounds, i.e.,~for: 
\begin{equation}
\label{roughenc}
[\phi_{t_0}^t([X])]\text{, for }t\in\left(t_0, t_1\right),
\end{equation}
the so-called \emph{rough enclosures} can be used. These provide \emph{coarse} bounds, as graphically illustrated in Figure~\ref{enclosure} in the appendix.

\item Compute interval matrix bounds for the $M_0/M_1$ Cauchy-Green deformation tensors:
\begin{equation}
\label{cgint}
\left[\left((C^T_0)^{-1}\cdot\left[D_x\phi_{t_0}^{t_1}([X])\right]^T\cdot C_1^T\right)\cdot\left(C_1\cdot \left[D_x\phi_{t_0}^{t_1}([X])\right]\cdot C^{-1}_0\right)\right],
\end{equation}
where $C^T_0, C_0$ are s.t.~$C^T_0C_0 = M_0$, and $C^T_1, C_1$ are s.t.~$C^T_1C^T = M_1$. The interval matrix operations are executed in the order given by the brackets.
\item Compute a value $\Lambda>0$ ($M_0/M_1$ stretching factor) as an upper bound for the square-root of the maximal eigenvalue of each (symmetric) matrix in \eqref{cgint}:
\begin{gather*}
\Lambda \geq \sqrt{\lambda_{max}\left( C \right)},\\\text{for all }C\in\left[\left((C^T_0)^{-1}\cdot\left[D_x\phi_{t_0}^{t_1}([X])\right]^T\cdot C_1^T\right)\cdot\left(C_1\cdot\left[D_x\phi_{t_0}^{t_1}([X])\right]\cdot C^{-1}_0\right)\right].
\end{gather*}
This quantity $\Lambda$ can be used for the purpose of computation of validated bound for the \emph{finite-time Lyapunov exponent}
\[
FTLE(B_{M_0}([x_0],\delta_0)) = \frac{1}{t_1-t_0}\ln(\Lambda_{prev}\cdot\Lambda),
\]
where $\Lambda_{prev}$ is the product of all stretching factors computed in previous steps.
\item Compute the new radius for the ball at time $t_1\,{=}\,t_0\,{+}\,h$:
\[
\rd(t_1) = \Lambda\cdot\rd(t_0),
\]
\item Set the new center of the ball at time $t_1$ as follows:
\[
[x_{1}]=[\phi_{t_0}^{t_1}([x_0])].
\]
%
%\item If in Step 5 a suitable matrix $M$ was found, and used to compute the norm and the radius of the ball $\delta_M(t)$ in Step~6, compute a CAPD representable enclosure for the $M$-ball of radius $\delta_M(t)$.
%
\item Set the initial time to $t_1$, the bounds for the initial  center of the ball to $[x_{1}]$, the current norm to $M_1$, the radius in $M_1$-norm to $\delta_1\,{=}\,\Lambda\cdot\rd_0$, and the ball enclosure for the set of initial states
to $B_{M_1}([x_{1}], \Lambda\cdot\rd(t_0))$. 
If $t_1 \geq T$ terminate. Otherwise go back to 1. 
\end{enumerate}
\noindent{\bf End LRT}

\vspace*{-1ex}
\subsection{LRT-Algorithm Correctness Proof}
\label{secalgcorrect}
In this section we provide a proof that the LRT, our new reachtube-computation algorithm, is an overapproximation of the behavior of the system described by Equations~\eqref{cauchy}. This main result is captured by the following theorem.

\begin{theorem}[LRT-Conservativity]
Assume that the rigorous tool used in the Lagrangian Reachtube Algorithm (LRT) produces
conservative gradient enclosures for 
system~\eqref{cauchy} in the sense of Definition~\ref{defconserv}, and it guarantees the existence of the solutions within time intervals. Assume also that the LRT terminates on the provided inputs. 

Then,
the output of the LRT is a \emph{conservative reachtube over-approximation of~\eqref{cauchy}} at times $\{t_j\}_{j=0}^k$, that is:
\[
\reach((t_0,\mathcal{X}), t_j) \subset 
B_{M_j}([x_j], \delta_j)\text{, for } j = 1,\dots,k,
\]
bounded solutions exists for all intermediate times $t\in(t_j,t_{j+1})$.
\end{theorem}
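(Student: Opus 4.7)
The plan is to proceed by induction on the step index $j$, using Theorem~\ref{thmmaindiscrete} as the workhorse for each single step and the semigroup property of the flow to chain steps. The base case is trivial: the set of initial states at $t_0$ is exactly $B_{M_0}([x_0], \delta_0)$ by the algorithm's input specification. For the inductive step, I would assume $\reach((t_0,\mathcal{X}), t_{j-1})\subset B_{M_{j-1}}([x_{j-1}],\delta_{j-1})$ and show the corresponding inclusion at $t_j$.

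To carry out the inductive step, I would unfold what the $j$-th iteration of LRT produces. Step 4 constructs the box $[X_{j-1}]\,{:=}\,C_{j-1}^{-1}(C_{j-1}[x_{j-1}]+[-\delta_{j-1},\delta_{j-1}]^n)$; a direct check using $\|y\|_{M_{j-1}}=\|C_{j-1}y\|_2$ and $\|\cdot\|_\infty\leq\|\cdot\|_2$ shows $B_{M_{j-1}}([x_{j-1}],\delta_{j-1})\subset[X_{j-1}]$. By the standing assumption that the rigorous tool returns conservative gradient enclosures, the interval matrix $[D_x\phi_{t_{j-1}}^{t_j}([X_{j-1}])]$ contains $D_x\phi_{t_{j-1}}^{t_j}(\xi)$ for every $\xi\in[X_{j-1}]$, hence for every $\xi\in B_{M_{j-1}}([x_{j-1}],\delta_{j-1})$. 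The $\Lambda$ produced in step~7 then satisfies the hypothesis \eqref{lambdageq} of Theorem~\ref{thmmaindiscrete} with $\mathcal{D}\,{=}\,[D_x\phi_{t_{j-1}}^{t_j}([X_{j-1}])]$ and matrices $M_{j-1},M_j$. Applying that theorem yields, for every $y\in B_{M_{j-1}}([x_{j-1}],\delta_{j-1})$, the inclusion $\phi_{t_{j-1}}^{t_j}(y)\in B_{M_j}(\phi_{t_{j-1}}^{t_j}(\tilde{x}_{j-1}),\Lambda\delta_{j-1})$ for any chosen center $\tilde{x}_{j-1}\in[x_{j-1}]$, and taking the union over $\tilde{x}_{j-1}\in[x_{j-1}]$ plus the enclosure $[\phi_{t_{j-1}}^{t_j}([x_{j-1}])]\subset[x_j]$ gives $\phi_{t_{j-1}}^{t_j}(y)\in B_{M_j}([x_j],\delta_j)$ with $\delta_j=\Lambda\delta_{j-1}$.

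To close the induction I would use the semigroup property $\phi_{t_0}^{t_j}=\phi_{t_{j-1}}^{t_j}\circ\phi_{t_0}^{t_{j-1}}$: for any $x'\in\mathcal{X}$, the inductive hypothesis gives $y:=\phi_{t_0}^{t_{j-1}}(x')\in B_{M_{j-1}}([x_{j-1}],\delta_{j-1})$, and then the preceding inclusion gives $\phi_{t_0}^{t_j}(x')=\phi_{t_{j-1}}^{t_j}(y)\in B_{M_j}([x_j],\delta_j)$, which is exactly the conclusion at step $j$. Existence of bounded solutions for intermediate times $t\in(t_{j-1},t_j)$ follows by invoking Theorem~\ref{thmmaincont} on each sub-interval: the rigorous tool's rough enclosures \eqref{roughenc} furnish a $t$-parametrized family $\{\mathcal{D}_t\}$ of conservative gradient enclosures, and the $\Lambda$ already computed dominates the corresponding spectral radii at every $t\in[t_{j-1},t_j]$.

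The main obstacle I anticipate is careful bookkeeping of the nested over-approximations in step~4 and step~5, namely verifying that the $M_{j-1}$-ball really is contained in the canonical box $[X_{j-1}]$ and that the conservativity guarantee for the rigorous tool transfers cleanly from $[X_{j-1}]$ down to $B_{M_{j-1}}([x_{j-1}],\delta_{j-1})$, and finally that the interval-matrix product in step~7 yields a valid enclosure of the $M_{j-1}/M_j$-deformation tensor along every $\xi$ appearing in the mean-value formula of Lemma~\ref{lemdiscrep2}. Once this bookkeeping is made precise, everything else reduces to a mechanical chaining of Theorem~\ref{thmmaindiscrete} across successive steps.
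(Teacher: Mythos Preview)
Your proposal is correct and follows essentially the same route as the paper: the paper also reduces to a single step (``without loss of generality we analyze the first step''), checks the box enclosure of Step~4, uses conservativity of the gradient enclosure from Step~5, and then applies Theorem~\ref{thmmaindiscrete} to land in the next ball; you have simply made the induction and the semigroup property explicit. One small correction on the last part: the paper does not argue intermediate-time existence via $\Lambda$ and Theorem~\ref{thmmaincont}---the $\Lambda$ computed in Step~7 bounds the deformation tensor only at $t_j$, not at every $t\in(t_{j-1},t_j)$---so boundedness and existence there should be read off directly from the rough enclosures guaranteed by the rigorous tool, which is exactly what the paper does.
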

\begin{proof}
Let $\mathcal{X} = B_{M_0}([x_0],\delta_0)$ be a ball enclosure for the set of initial states. Without loosing generality we analyze the first step of the algorithm. The same argument applies to the consecutive steps (with the initial condition changed appropriately,
as explained in the last step of the algorithm).

The representable enclosure  $[X]\subset\mathbb{R}^n$ (product of intervals in canonical coordinates)  computed in Step~4 
satisfies $B_{M_0}([x_0],\delta_0)\subset [X]$. 
By the assumption the rigorous-tool used produces conservative  enclosures for the gradient of the flow 
induced by ODEs \eqref{cauchy}. Hence, as the set $[X]$ containing $B_{M_0}([x_0],\delta_0)$
is the input to the rigorous forward-time integration procedure in Step~5,
for the gradient enclosure $[D_x\phi_{t_0}^{t_1}([X])]$ computed in Step~5, it holds that 
\[
\left\{D_x\phi_{t_0}^{t_1}(x)\text{ for all }x\in\mathcal{X}\right\} \subset [D_x\phi_{t_0}^{t_1}([X])].
\]
Therefore, the set $[D_x\phi_{t_0}^{t_1}([X])]$ can be interpreted as the set $\mathcal{D}$, i.e., the compact set containing all gradients of the solution at time $t_1$ with initial condition in $B_{M_0}([x_0],\delta_0)$ appearing in Theorem~\ref{thmmaindiscrete}, see \eqref{D}.
As a consequence, the value $\Lambda$ computed in Step~7 satisfies the following inequality:
\[
\Lambda \geq \sqrt{ \lambda_{max}\left( (C_1^T)^{-1}D^TM_0DC^{-1}_1 \right) },\text{ for all }D\in\mathcal{D}.
\]
From Theorem~\ref{thmmaincont} it follows that: 
\[
\phi_{t_0}^{t_1}(x)\in B_{M_1}(\phi_{t_0}^{t_1}([x_0]), \Lambda \cdot \delta(t_0))\text{ for all }x\in\mathcal{X},
\]
which implies that:
\[
\reach((t_0,\mathcal{X}), t_1)\subset B_{M_1}( [\phi_{t_0}^{t_1}([x_0])], \Lambda\cdot\rd(t_0))
\]
 which proves our overapproximation (conservativity) claim. Existence of the solutions for all times $t\in(t_0,t_1)$ is guaranteed by the assumption about the rigorous tool (in the step~5 of the LRT algorithm we use rough enclosures, see Figure~\ref{enclosure}).
\qed
\end{proof}
%
%\subsection{Means of computing an over-approximation for the gradient of the flow using tools other than CAPD (e.g. Flow*)}
%\label{secgradcomput}
%
%\textbf{JC write that Flow* provides dense output which allows to compute continuous reachtube}
%
%
\subsection{Wrapping Effect in the Algorithm}
\label{sec:wrapping}

A very important precision-loss issue of conservative approximations, and therefore of validated methods for ODEs, 
is the \emph{wrapping effect}. This occurs when a set of states is wrapped (conservatively enclosed) within a box defined in a particular norm. The weighted $M$-norms technique the LRT uses (instead of the standard Euclidean norm) is a way of reducing the effect of wrapping. 
More precisely, in Step~7 of every iteration, the LRT finds an appropriate norm which minimizes the stretching factor computed from the set of Cauchy-Green deformation tensors. 

However, there are other sources of the wrapping effect in the algorithm.
The discrete reachtube bounds are in a form of a ball in appropriate metric space, which is an ellipsoidal set in canonical coordinates, see example on Fig.~\ref{fig:ss-level}. 
In Step~4 of the LRT algorithm, a representable enclosure in canonical coordinates for the ellipsoidal reachtube over-approximation is computed. When the ellipsoidal set is being directly wrapped into a box in canonical coordinates (how it is done in the algorithm currently), the wrapping effect is considerably larger than when the ellipsoidal set is wrapped into a rectangular set reflecting the eigen-coordinates.
We illustrate the wrapping effect using the following weighted norm (taken from one of our experiments): 
\begin{equation}
\label{M}
M = 
\begin{bmatrix}
7&-9.5\\
-9.5&19 
\end{bmatrix}
\end{equation}

\noindent
Figure~\ref{fig} shows the computation of enclosures for a ball represented in the weighted norm given by $M$. It is clear from Figure~1(c) that the box enclosure of the ball in the eigen-coordinates (blue rectangle) is much tighter than the box enclosure of it in the canonical coordinates (green square).
\begin{figure}[H] 
\vspace*{-0.2in}
\centering
\subfloat[]
{\includegraphics[width=0.3\linewidth]{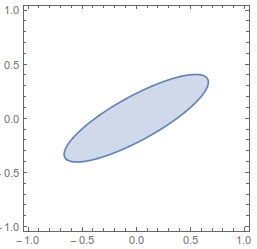} 
\label{fig:ss-level}}
\subfloat []
{\includegraphics[width=0.3\linewidth]{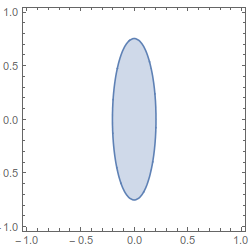}}
\subfloat[]
{{\includegraphics[width=0.31\linewidth]{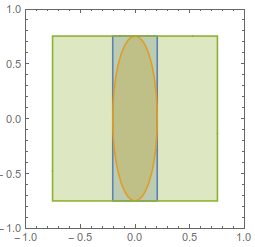} 
}\label{figc}}
\caption{(a)~A ball in the weighted norm given by $M$ of radius $1$ (the ellipsoidal set).
(b)~The ellipsoidal set in its eigen-coordinates (unrotated).
(c)~Wrapping the ellipsoidal set in a box: blue rectangle in eigen-coordinates and green square in canonical-coordinates.
}
\label{fig}
\vspace*{-.2in}
\end{figure}

Step~6 of the LRT is another place where reducing the wrapping effect has the potential to considerably increase the precision of the LRT. This step computes the product of interval matrices, which results in large overapproximations for wide-intervals matrices. In fact, in the experiments considered in Section~\ref{sec:results}, if the initial-ball radius is large, we observed that the overestimate of the stretching factor tends to worsen the LRT performance in reachtube construction, when compared to a direct application of CAPD and Flow*. We plan to find workarounds for this problem. One possible solution would be to use matrix decomposition, and compute the eigenvalues of the matrix by using this decomposition.
\subsection{Direct  computation of the optimal $M$-norm} 
%{for estimation of the stretching factor}
\label{secoptnorm}

The computation of the optimal $M$-norm enables the estimation of the streching factor. Step~7 of the LRT finds norm $M_1$ and decomposes it as $M_1 = C_1^TC_1$, such that, for a gradient matrix $D$ the following inequality holds for all positive-definite symmetric matrices $\widetilde{M}$:
\begin{equation}
\label{Mopt}
\sqrt{\lambda_{max}\left( (C_1^T)^{-1}D^TM_1DC_1^{-1} \right)} \leq 
\sqrt{\lambda_{max}\left( (\widetilde{C}^T)^{-1}D^T\widetilde{M}D \widetilde{C}^{-1} \right)},
\end{equation}
Below we illustrate how to compute $M_1$ for 2D systems. This can be generalized to higher-dimensional systems.
\paragraph{\rm\textbf{I)~$\mathbf{D}$ has complex conjugate eigenvalues $\mathbf{\lambda}\,\mathbf{=}\,\mathbf{\alpha}\,\mathbf{\pm}\,\mathbf{i}\mathbf{\beta}$}.}
In this case $w\pm iv$ is the associated pair of complex conjugate eigenvectors, where $w,v\in\mathbb{R}^2$. Define: 
\[
C = \begin{bmatrix} w&v\end{bmatrix}^{-1}
\]
As a consequence we have the following equations:
\[
C D C^{-1} = \begin{bmatrix} \alpha&\beta\\-\beta&\alpha\end{bmatrix},\text{ and }(C^T)^{-1}D^TC^T = \begin{bmatrix} \alpha&-\beta\\\beta&\alpha \end{bmatrix},
\]
Thus, one obtains the following results:
\[
(C^T)^{-1}D^TMAC^{-1} = ((C^T)^{-1}D^TC^T)(CDC^{-1}) = \begin{bmatrix}\alpha^2+\beta^2&0\\0&\alpha^2+\beta^2\end{bmatrix},
\]
Clearly, one has that: 
\[
\lambda_{max}((C^T)^{-1}D^TMDC^{-1}) = \alpha^2 + \beta^2.
\]
As the eigenvalues of $(C^T)^{-1}D^TMDC^{-1}$ are equal, it follows from the identity of the determinants that the inequality below holds for any  $\widetilde{M} = \widetilde{C}^T\widetilde{C}$:
\[
\lambda_{max}((C^T)^{-1}D^TMDC^{-1}) \leq \lambda_{max}((\widetilde{C}^T)^{-1}D^T\widetilde{M}D\widetilde{C}^{-1})
\]

\paragraph{\rm\textbf{II) $\mathbf{D}$ has two distinct real eigenvalues $\lambda_1\neq \lambda_2$}}
In this case we do not find a positive-definite symmetric matrix. However, we can find a rotation matrix, defining
coordinates in which the stretching factor is smaller than in canonical coordinates ($M$-norm).
Let $B\in\mathbb{R}^{2\times 2}$ be the eigenvectors matrix of $D$. Denote: 
\[
 B^{-1}DB = \widetilde{D} = \begin{bmatrix} \lambda_1&0\\0&\lambda_2\end{bmatrix}.
\]
%We have
%\[
% B^TD^T(B^T)^{-1}B^{-1}DB = \begin{bmatrix} \lambda_1^2&0\\0&\lambda_2^2\end{bmatrix},\text{ and }\lambda_1\neq\lambda_2.
%\]
%
Let $R$ be the rotation matrix
\[
R = \begin{bmatrix} c & -s\\ s&c\end{bmatrix},\ c,s\neq 0\text{, hence, }
R^{-1}\widetilde{D}R = \begin{bmatrix} \frac{\lambda_1c^2+\lambda_2s^2}{c^2+s^2} & \frac{(\lambda_1-\lambda_2)cs}{c^2+s^2}\\-\frac{(\lambda_1-\lambda_2)cs}{c^2+s^2}  & \frac{\lambda_1c^2+\lambda_2s^2}{c^2+s^2}\end{bmatrix}.
\]
\[
\text{For }s,c = 1\text{ we have }R^T\widetilde{D}^T(R^T)^{-1}R^{-1}\widetilde{D}R = \begin{bmatrix} \left(\frac{\lambda_1+\lambda_2}{2}\right)^2 & 0\\ 0 & \left(\frac{\lambda_1+\lambda_2}{2}\right)^2\end{bmatrix}.
\]
Therefore, we may set 
$
 C = (BR)^{-1},
$
which for $\lambda_1<\lambda_2$ results in\\ $\lambda_{max}(C^TD^T(C^T)^{-1}C^{-1}DC) < \lambda_{max}( D^T D )$, 
because $\left(\frac{\lambda_1+\lambda_2}{2}\right)^2 < \lambda_2^2$ .
\paragraph{\rm\textbf{III) $\mathbf{D\in\mathbb{R}^{n\times n}}$, where $\mathbf{n>2}$}.}
In this case we call the Matlab engine part of our code. Precisely, we use the external linear-optimization packages \cite{mosek,yalmip}. We initially set $\gamma = (\lambda_1\lambda_2\cdots\lambda_n)^{1/n}$. Then, using the optimization package, we try to find $M_1$ and its decomposition, such that: 
\begin{equation}
\label{gamma}
\sqrt{\lambda_{max}\left( (C_1^T)^{-1}D^TM_1DC_1^{-1} \right)}\leq\gamma,
\end{equation}
If we are not successful, we increase $\gamma$ until an $M_1$ satisfying \eqref{gamma} is found.
%
% Experimental evaluation section
\section{Implementation and Experimental Evaluation}
\label{sec:results}
\paragraph{\bf{Prototype Implementation}}
\label{secimpldet}
Our implementation is based on interval arithmetic, i.e. all variables used in the algorithm are over intervals, and all computations performed are executed using interval arithmetic. The main procedure is implemented in C++, which includes header files for the CAPD tool (implemented in C++ as well) to compute rigorous enclosures for the center of the ball at time $t_1$ in step~1, and for the gradient of the flow at time $t_1$ in step~5 of the LRT algorithm (see Section~\ref{sec:rigcomp}). 

To compute the optimal norm and its decomposition for dimensions higher than $2$ in step~2 of the LRT algorithm, we solve a semidefinite optimization problem. We found it convenient to use dedicated Matlab packages for that purpose \cite{mosek,yalmip}, in particular for Case~3 in Section~\ref{secoptnorm}. 

To compute an upper bound $\Lambda$ for the square-root of the maximal eigenvalue of all symmetric matrices in some interval bounds, we used the VERSOFT package \cite{rohn1998bounds,versoft} implemented in Intlab \cite{intlab}. To combine C++ and the Matlab/Intlab part of the implementation, we use an engine that allows one to call Matlab code within C++ using a special makefile. The source code, numerical data, and readme file describing compilation procedure for LRT can be found online~\cite{codes}. 

We remark that the current implementation is a proof of concept; in particular, it is not optimized in terms of the runtime---a direct CAPD implementation is an order of magnitude faster. We will investigate ways of significantly improving the runtime of the implementation in future work.
\vspace{-2ex}
\begin{center}
\begin{table*}
\hspace*{-2ex}
\centering
\bgroup
\def\arraystretch{1.5}% 
\begin{tabular}{|c|c|c|c|c|c|c|c|c|c|c|c|}
%\begin{tabular}{cccccccccccccccc}
\hline % column name
\multirow{2}{*}{BM} & {dt} & {T} & {ID}	&	
      \multicolumn{2}{c|}{LRT}&
      \multicolumn{2}{c|}{Flow*} &
      \multicolumn{2}{c|}{(direct) CAPD} &
      \multicolumn{2}{c|}{nLRT}
      \\
      \cline{5-12}
     &&&& (F/I)V &  (A/I)V & (F/I)V & (A/I)V & (F/I)V & (A/I)V &(F/I)V & (A/I)V\\
	\hline 
%\midrule
%Brusselator
B(2) &0.01 &20 &0.02
&\textbf{7.7e-5} &\textbf{0.09}
&7.9e-5 &0.15
&Fail &Fail
&Fail &Fail\\
\hline
% %Brusselator
% \rowcolor{LightCyan}
% B(2) &0.01 &10 &0.02
% &0.053 &\textbf{26.29}
% &\textbf{0.008} &14.86
% &135.75 &75.24
% &Fail &Fail\\
% \hline
%Van Der Poll
I(2) &0.01 &20 &0.02
&\textbf{4.3e-9} &\textbf{0.09}
&5e-9 &0.12
&7.4e-9 &0.10
&1.45 &1.31
\\
\hline
%Lorentz
L(3) &0.001 &2 &1.4e-6
&1.6e5  &9.0e3
&2.1e13 &7.1e12
&\textbf{4.5e4} &\textbf{1.7e3}
&1.2e22 &4.9e19 
\\
\hline
% %Forced Van Der Poll
% F(2) &0.01 &10 &2e-3
% &\textbf{5.99e-10} &5.57
% &NA &NA
% &0.02 &\textbf{0.70}
% &5.64e7 &3.21e5
% \\
% \hline
%Forced Van Der Poll
%\rowcolor{LightCyan}
F(2) &0.01 &40 &2e-3
&\textbf{1.2e-42} &1.01
&NA &NA
&6.86e-4 &\textbf{0.18}
&5.64e7 &3.21e5
\\
\hline
% %Forced Van Der Poll
% \rowcolor{LightCyan}
% F(2) &0.01 &\textbf{20} &2e-3
% &\textbf{3.05e-23} &2.01
% &NA &NA
% &9.76e-4 &\textbf{0.36}
% &5.64e7 &3.21e5
% \\
% \hline
% %MS Model
% M(2) &1e-3 &1 &2e-3
% &\textbf{0.34} & \textbf{0.63}
% &0.70 &0.88
% &0.69 &0.85
% &1.5 &1.24
% \\
% \hline
% %MS Model
% \rowcolor{LightCyan}
% M(2) &1e-3 &\textbf{2} &2e-3
% &\textbf{0.097} &\textbf{0.412}
% &0.455 &0.721
% &0.45 &0.702
% &1.497 &1.24
% \\
% \hline
%\rowcolor{LightCyan}
M(2) &0.001 &4 &2e-3
&\textbf{0.006} &\textbf{0.22}
&0.29 &0.54
&0.29 &0.52
&4.01 &2.24
\\
\hline
% %Robot Arm Model
% R(4) &0.01 &10 &1e-2
% &\textbf{5.28e-10} &1.13
% &2.18e-7 &0.614
% &5.03e-5 &\textbf{0.46}
% &146.30 &31.59
% \\
% \hline
%Robot Arm Model
%\rowcolor{LightCyan}
R(4) &0.01 &20 &1e-2
&\textbf{2.2e-19} &\textbf{0.07}
&2.4e-15 &0.31
&2.9e-11 &0.23
&Fail &Fail
\\
\hline
% % Biology Model
% O(7) &0.01 &2 &2e-4
% &96.35 &66.30
% &\textbf{91.72} &1.02e5
% &157.88 &\textbf{54.38}
% &Fail &Fail
% \\
% \hline
% Biology Model
O(7) &0.01 &4 &2e-4
&\textbf{71.08} &\textbf{34.35}
&272 &5.1e4
&4.3e3 &620
&Fail &Fail
\\
\hline
%AS Polynomial
P(12))  &5e-4 &0.1 &0.01
&\textbf{5.25} &\textbf{4.76}
&290 &64.6
&280 &62.4
&19.4 &6.2
\\
\hline
%\bottomrule
\end{tabular}
\egroup
\vspace{1ex}
\vspace*{1ex}
\caption{Performance comparison with Flow* and CAPD. 
%For nLRT, the Euclidean norm is used at every time step.
%In the notation B(d), B is the benchmark abbreviation and d is the benchmark dimension. 
We use the following abbreviations: B(2)-Brusselator, I(2)-Inverse Van der Pol oscillator, L(3)-Lorenz attractor, F(2)-Forced Van der Pol oscillator, M(2)-Mitchell Schaeffer cardiac cell model, R(4)-Robot arm, O(7)-Biology model, A(12)-Polynomial system (number inside parenthesis denotes dimension). 
%Due to space constraints, we provide results with only two decimal digits accuracy.  
T: time horizon, dt: time step, ID: initial diameter in each dimension, (F/I)V: ratio of final and initial volume, (A/I)V: ratio of average and initial volume. NA: Not applicable, Fail: Volume blow-up. We mark in bold the best performers.}
\label{table:res}
\vspace{-4ex}
\end{table*}
\end{center}
\vspace*{-4ex}

\vspace{-5ex}
\paragraph{\bf{Experimental Evaluation}}
We compare the results obtained by LRT with direct CAPD and Flow* on a set of standard benchmarks~\cite{CAS2,FKXS}: the Brusselator, inverse-time Van der Pol oscillator, the Lorenz equations~\cite{lorenz1963deterministic}, a robot arm model~\cite{angeli2000characterization}, a 7-dimensional biological model~\cite{FKXS}, and a 12-dimensional polynomial~\cite{anderson2012decomposition} system. Additionally, we consider the forced Van der Pol oscillator~\cite{van1927vii} (a time-variant system), and the Mitchell Schaeffer~\cite{mitchell03} cardiac cell model.

Our results are given in Table~\ref{table:res}, and were obtained on a Ubuntu 14.04 LTS machine, with an Intel Core i7-4770 CPU  3.40GHz x 8  processor and 16 GB memory. The results presented in the columns labeled \emph{(direct) CAPD} and \emph{Flow*} were obtained using the CAPD software package and Flow*, respectively. The internal parameters used in the codes can be checked online \cite{codes}. The comparison metric  that we chose is ratio of the final and initial volume, and ratio of the average and initial volume. We compute directly volumes of the reachtube over-approximations in form of rectangular sets obtained using CAPD, and Flow* software. Whereas, the volumes of the reachtube over-approximations obtained using the LRT algorithm are approximated by the volumes of the tightest rectangular enclosure of the ellipsoidal set, as illustrated on Fig.~\ref{figc}.
The results in \emph{nLRT} column were obtained using a naive implementation of the LRT algorithm, in which the metric space is chosen to be globally Euclidean, i.e., $M_0=M_1=\dots=M_k  = Id$. For MS model the initial condition (i.c.) is in stable regime. For Lorenz i.c. is a period $2$ unstable periodic orbit. For all the other benchmark equations the initial condition was chosen as in \cite{FKXS}.

\begin{figure}
\vspace{-6ex}
\centering
\subfloat[\scriptsize horizon = $\lbrack0,10\rbrack$]
{\includegraphics[height=35mm,width=50mm]{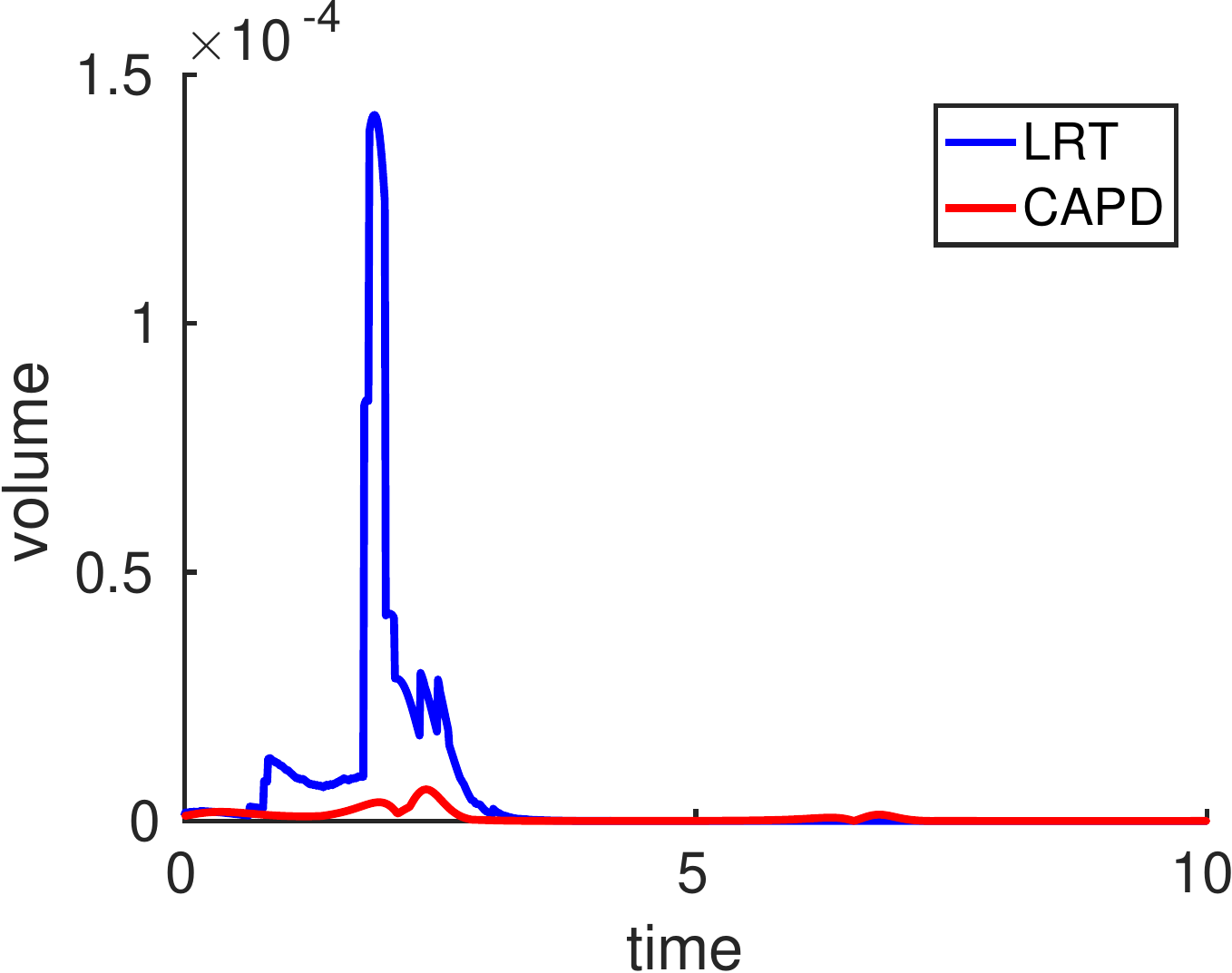}}
\quad \quad 
\subfloat[\scriptsize horizon = $\lbrack3.5, 10\rbrack$]
{\includegraphics[height=35mm,width=50mm]{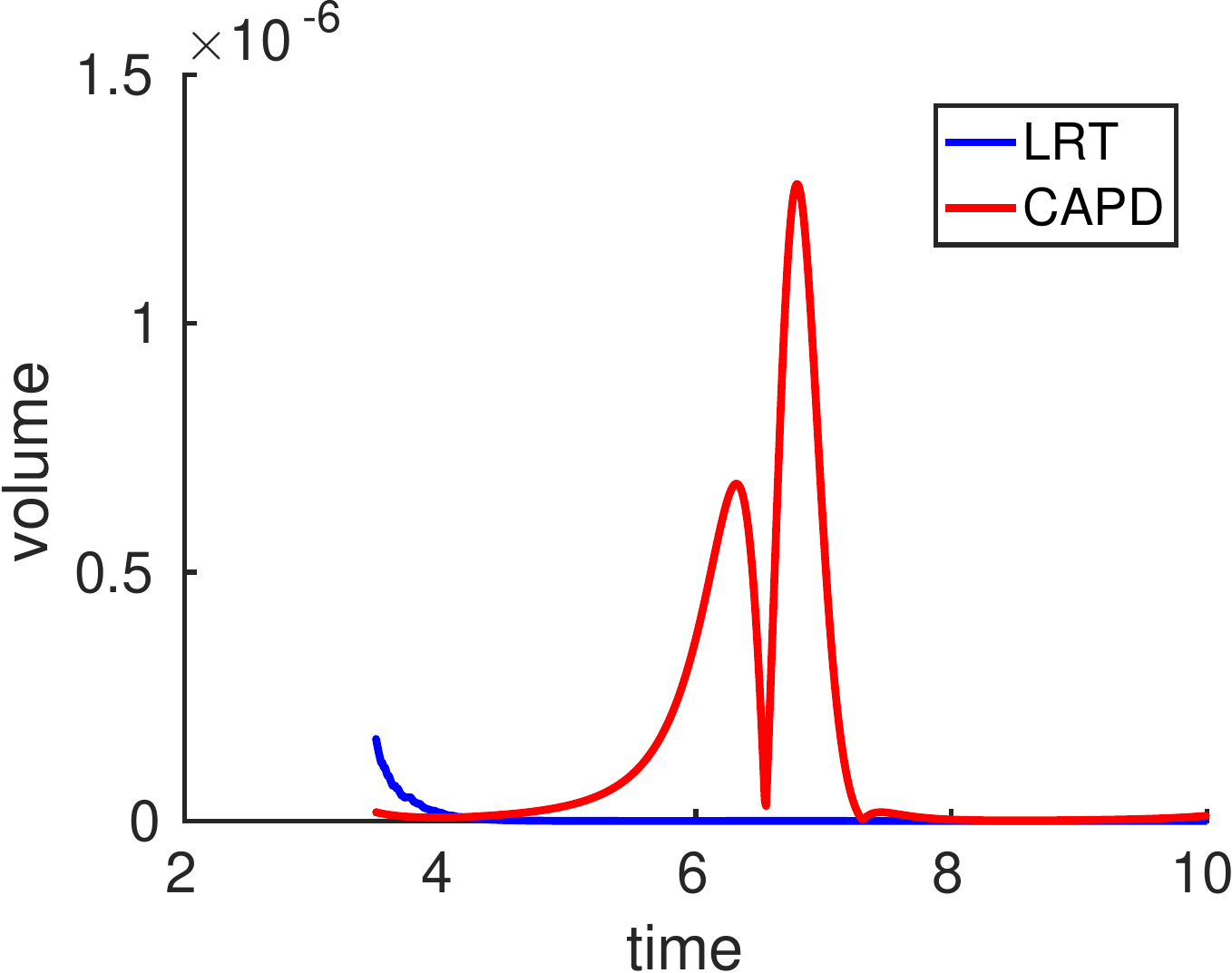}}
\caption{Volume comparison for forced Van der Pol Oscillator}
\label{fig:fvdp}
\vspace{-4ex}
\end{figure}
Some interesting observations about our experiments are as follows. 
Fig.~\ref{fig:fvdp} illustrates that the volumes of the LRT reachtube of the forced Van der Pol oscillator increase significantly for some initial time-steps compared to CAPD reachtube (nevertheless reduce in the long run). This initial increase is related to the computation of new coordinates, which are significantly different from the previous coordinates, resulting in a large stretching factor in step~7 of the LRT algorithm. Namely, we observed that it happens when the coordinates are switched from Case~II to Case~I, as presented in Section~\ref{secoptnorm}. We, however, observe that this does not happen in the system like MS cardiac model (see Fig.~\ref{fig:ms_lorentz}(a)). We believe those large increases of stretching factors in systems like fVDP can be avoided by smarter choices of the norms. We will further investigate those possibilities in future work. 

Table~\ref{table:res} shows that LRT does not perform well for Lorentz attractor compared to CAPD (see also Fig.~\ref{fig:ms_lorentz}(b)), as the CAPD tool is current state of the art for such chaotic systems. LRT, however, performs much better than Flow* for this example. In all other examples, the LRT algorithm behaves favorably (outputs tighter reachtube, compare (F/I)V) in the long run. 

\begin{figure}
\vspace{-6ex}
\centering
\subfloat[\scriptsize  Cardiac Model]
{\includegraphics[height=40mm,width=50mm]{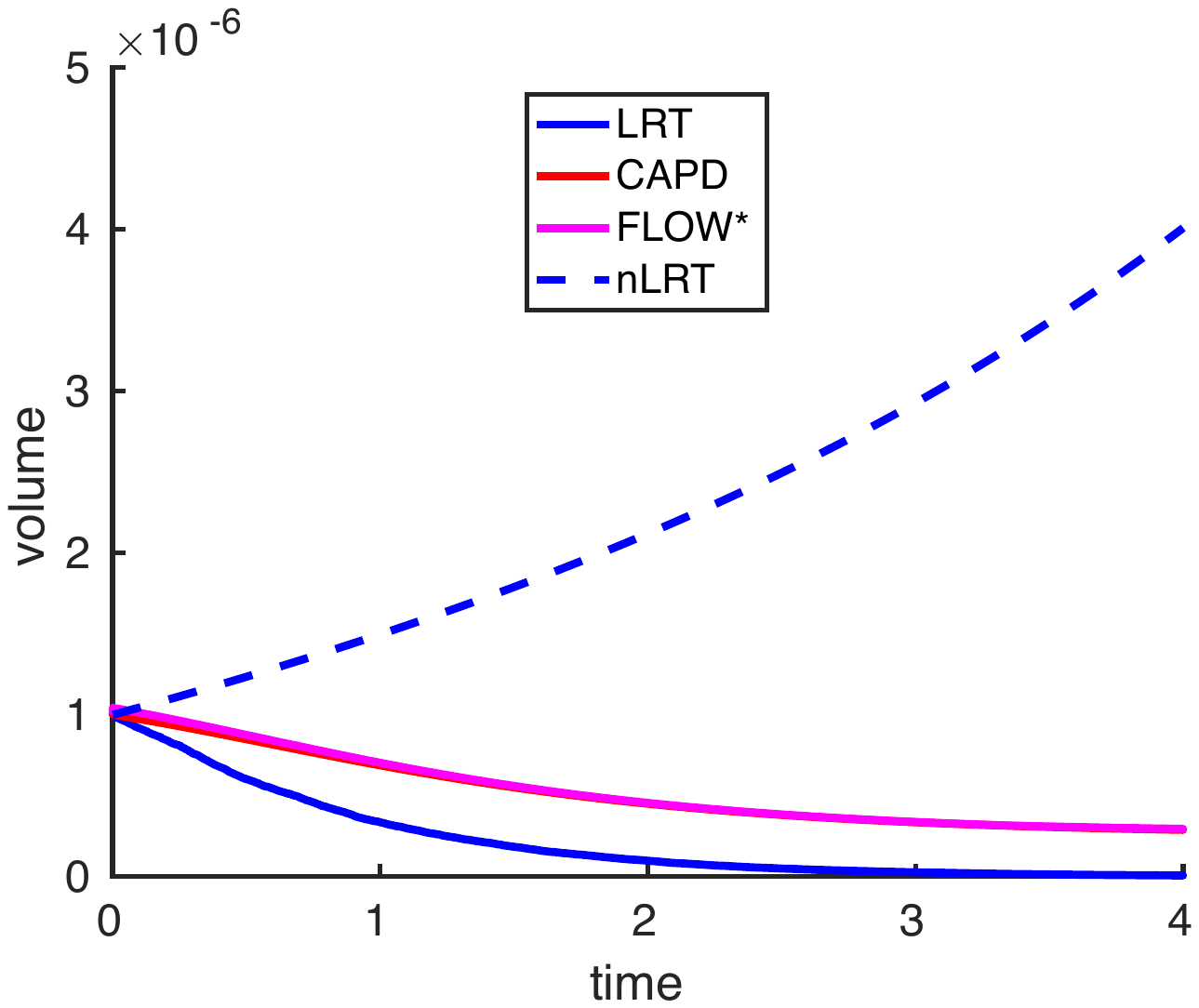}}
\quad \quad 
\subfloat[\scriptsize Lorentz Attractor (Volumes computed by Flow* and nLRT are too large)]
{\includegraphics[height=40mm,width=50mm]{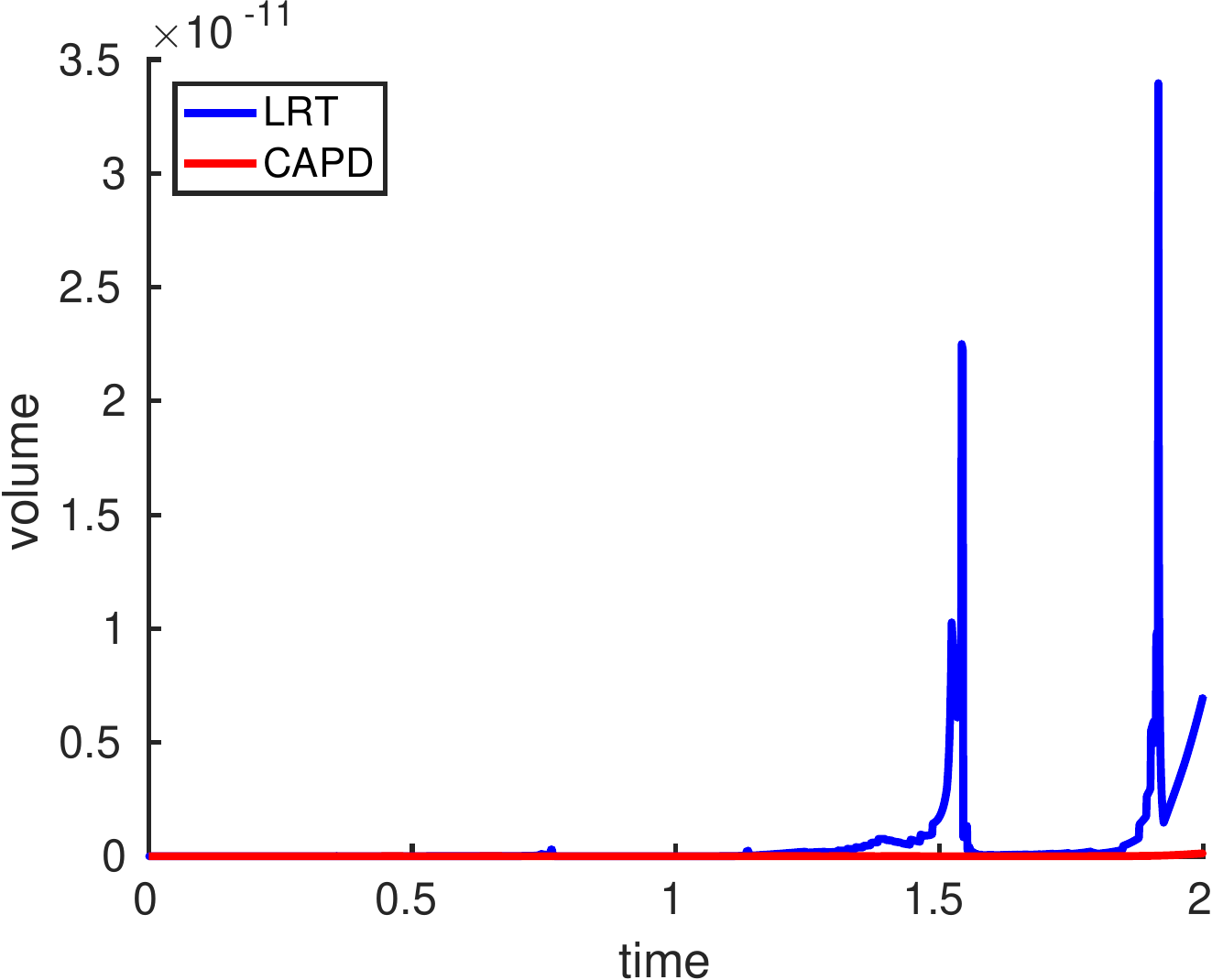}}
\vspace{-1ex}
\caption{Volume comparison in two important benchmarks}
\vspace{-4ex}
\label{fig:ms_lorentz}
\end{figure}

%%It is apparent from the experimental evaluation that of the LRT algorithm performed on ODEs considered as stable -- in as sense that they exhibit a unique attracting orbit that once the reachtube bounds 'settle' on the stable orbit, the bounds become tight (tighter than those obtained from the direct application of the rigorous tools). 

%%%The  spikes visible for the LRT curve in Figures~\ref{fig:fvdp} and~\ref{fig:ms_lorentz}~(b) are due to some inappropriate norm switches in step~2 of the LRT algorithm (see Section~\ref{sec:rigcomp}). This may negatively affect stability of our algorithm, and we plan to work further on the norm selection part of the algorithm,

%\end{comment}

%% \vspace*{-3ex}

\vspace{-4ex}
\vspace{2ex}
\section{Conclusions}
\label{sec:conclusions}
\vspace{-2ex}
We presented LRT, a rigorous procedure for computing ReachTube overapproximations based on Cauchy-Green stretching factor computation, in Lagrangian coordinates. We plan to pursue further research on our algorithm. One appealing possibility is to extend LRT to hybrid systems widely used in research on cardiac dynamics, and many other fields of study. 

We also plan to implement LRT with forward-in-time integration and con\-serv\-a\-tive-enclosures computation of the gradient, by just using a simple independent code (instead of CAPD). By such a code we mean a rigorous integration procedure based, for example, on the Taylor's method. This code would directly compute an interval enclosure for the Cauchy-Green deformation tensors in an appropriate metric space. It would then be interesting to compare the wrapping-reduction performance of such a code with the procedure described above.

% 
%\nocite{*}
%
\paragraph{Acknowledgments}
\vspace{-2ex}
Research supported in part by the following 
grants: NSF IIS-1447549, NSF CPS-1446832, NSF CPS-1446725, 
NSF CNS-1445770, NSF CNS-1430010, AFOSR FA9550-14-1-0261
and ONR N00014-13-1-0090.  J. Cyranka was supported in part by NSF grant DMS 1125174 and DARPA contract HR0011-16-2-0033.
\vspace{-2ex}
\bibliography{CG_algorithm}
\bibliographystyle{abbrv}
\newpage
\appendix
\section{Proofs of the Lemmas in Section~\ref{sec:bgd}}
\label{app:proofs}
\begin{lemma}
Consider the Cauchy problem \eqref{cauchy}. Let $x_0,y_0\in\mathbb{R}^n$ be two initial conditions at time $t_0$.
Let $M\in\mathbb{R}^{n\times n}$ be a positive-definite symmetric matrix and $C^TC = M$ be its decomposition. For $t_1 \geq t_0$, it holds that 
\begin{equation*}
%\label{Mestim}
  \|\phi_{t_0}^{t_1}(x_0){-}\phi_{t_0}^{t_1}(y_0)\|_M\,{\leq}\,\sqrt{\hat{\lambda}\left((C^T)^{-1}\left(D_x\phi_{t_0}^{t_1}(\xi)\right)^T 
  M\,D_x\phi_{t_0}^{t_1}(\xi) C^{-1}\right)}~\|x_0-y_0\|_M
\end{equation*}
where $\xi = \omega x_0 + (1-\omega)y_0$ for some $\omega\in[0,1]$.
For the particular case of the Euclidean norm, \eqref{Mestim} takes the form
\begin{equation*}
%\label{estim}
\|\phi_{t_0}^{t_1}(x_0)-\phi_{t_0}^{t_1}(y_0)\|_2 \leq \sqrt{\lambda_{max}\left(\left(D_x\phi_{t_0}^{t_1}(\xi)\right)^T\cdot D_x\phi_{t_0}^{t_1}(\xi)\right)}~\|x_0-y_0\|_2.
\end{equation*}
\end{lemma}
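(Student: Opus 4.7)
The plan is to reduce the vector-valued estimate to a scalar mean-value theorem plus the standard formula for the induced 2-norm. I would organize the proof in four steps.

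First, I would address the fact that a naive application of the mean value theorem to the vector-valued map $\phi_{t_0}^{t_1}$ only produces one $\xi$ per component, which is insufficient. To obtain a single $\xi = \omega x_0 + (1-\omega) y_0$, I would introduce the scalar auxiliary function
\begin{equation*}
h(s) = \langle u,\, M\, \phi_{t_0}^{t_1}(s x_0 + (1-s) y_0)\rangle_2,\qquad s\in[0,1],
\end{equation*}
where $u = \phi_{t_0}^{t_1}(x_0) - \phi_{t_0}^{t_1}(y_0)$. Then $h(1) - h(0) = \langle u, Mu\rangle = \|u\|_M^2$, while $h'(s) = \langle u,\, M\, D_x\phi_{t_0}^{t_1}(\xi_s)(x_0 - y_0)\rangle$ with $\xi_s = s x_0 + (1-s) y_0$. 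The scalar mean value theorem then yields an $\omega \in (0,1)$ with $\|u\|_M^2 = \langle u,\, M\, D_x\phi_{t_0}^{t_1}(\xi)(x_0-y_0)\rangle$ for $\xi = \omega x_0 + (1-\omega) y_0$.

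Second, I would apply the Cauchy-Schwarz inequality in the $M$-inner product to the right-hand side, obtaining $\|u\|_M^2 \leq \|u\|_M \cdot \|D_x\phi_{t_0}^{t_1}(\xi)(x_0-y_0)\|_M$, which after dividing by $\|u\|_M$ (the trivial case $u = 0$ being immediate) gives $\|u\|_M \leq \|D_x\phi_{t_0}^{t_1}(\xi)(x_0 - y_0)\|_M$.

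Third, I would use the identity $\|v\|_M = \|Cv\|_2$ to convert the $M$-norm to a Euclidean norm: writing $D := D_x\phi_{t_0}^{t_1}(\xi)$, one has
\begin{equation*}
\|D(x_0 - y_0)\|_M = \|C D C^{-1}\, C(x_0 - y_0)\|_2 \leq \|C D C^{-1}\|_2 \cdot \|C(x_0 - y_0)\|_2 = \|C D C^{-1}\|_2 \cdot \|x_0 - y_0\|_M.
\end{equation*}

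Finally, I would invoke the standard characterization of the induced 2-norm, $\|A\|_2 = \sqrt{\lambda_{\max}(A^T A)}$, applied to $A = C D C^{-1}$, and simplify $(CDC^{-1})^T (CDC^{-1}) = (C^T)^{-1} D^T C^T C D C^{-1} = (C^T)^{-1} D^T M D C^{-1}$, which completes the bound \eqref{Mestim}. The Euclidean special case \eqref{estim} follows by taking $M = C = I$. I expect the only real obstacle to be step one; everything after the single-$\xi$ representation is routine linear algebra.
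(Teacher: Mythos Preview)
Your proof is correct, and it actually handles the single-$\xi$ issue more carefully than the paper does. The paper's argument writes the fundamental-theorem-of-calculus identity
\[
\phi_{t_0}^{t_1}(x_0)-\phi_{t_0}^{t_1}(y_0)=\Big(\int_0^1 D_x\phi_{t_0}^{t_1}(\xi(\omega))\,d\omega\Big)(x_0-y_0)
\]
and then invokes ``the well known mean value theorem for integrals'' to extract a single $\hat\xi$ with $\phi_{t_0}^{t_1}(x_0)-\phi_{t_0}^{t_1}(y_0)=D_x\phi_{t_0}^{t_1}(\hat\xi)(x_0-y_0)$, after which it simply takes norms and applies the induced-norm formula \eqref{MnormMatrix}. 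That extraction step is not valid in general for matrix- or vector-valued integrands, which is exactly the obstacle you flagged. Your workaround---projecting onto $u=\phi_{t_0}^{t_1}(x_0)-\phi_{t_0}^{t_1}(y_0)$ via the scalar function $h$, applying the one-dimensional mean value theorem, and then recovering the norm bound through Cauchy--Schwarz in the $M$-inner product---is the standard rigorous fix and yields the stated inequality with a genuine single $\xi$. The remaining linear-algebra steps (rewriting $\|\cdot\|_M$ as $\|C\cdot\|_2$ and identifying $\|CDC^{-1}\|_2^2$ with $\lambda_{\max}((C^T)^{-1}D^TMDC^{-1})$) coincide with the paper's. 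In short: same destination, but your route closes a gap the paper leaves open, at the modest cost of one extra Cauchy--Schwarz step.
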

\begin{proof}
Let $\xi(\omega) = \omega x_0 + (1 - \omega)y_0$.
From 
\[
\int_0^1{D_x\phi_{t_0}^{t_1}( \xi(\omega) )\,d\omega} = \frac{1}{x_0-y_0}\left(\phi_{t_0}^{t_1}(x_0) - \phi_{t_0}^{t_1}(y_0)\right),
\]
and the well known mean value theorem for integrals, it holds that
\begin{equation*}
%\label{mvt}
\phi_{t_0}^{t_1}(x_0)-\phi_{t_0}^{t_1}(y_0) = D_x\phi_{t_0}^{t_1}(\hat{\xi})(x_0-y_0)
\end{equation*}
for some $\hat{\omega}\in[0,1]$, $\xi(\hat{\omega}) = \hat{\xi}$.
%We are interested in computing $\rd(T)$, i.e. an upper bound for the radius of the ball, such that it encloses all trajectories at time $T$
%\[
%\rd(T) \geq \sup_{y\in B(x_0,\rd(t_0))}{\|\phi_{t_0}^T(y) - \phi_{t_0}^T(x)\|}.
%\]
From taking norms in above equation we obtain
\[
{\|\phi_{t_0}^{t_1}(x_0) - \phi_{t_0}^{t_1}(y_0)\|} = \|D_x\phi_{t_0}^{t_1}(\hat{\xi})(x_0-y_0)\|\leq \|D_x\phi_{t_0}^{t_1}(\hat{\xi})\|\|x_0-y_0\|.
\]
Replacing $\|D_x\phi_{t_0}^{t_1}(\hat{\xi})\|_2$ with the induced Euclidean matrix norm we obtain \eqref{estim}. If we use the weighted $M$-norm~\eqref{Mnorm}
we have for the matrix norm \eqref{MnormMatrix}
\begin{equation*}
  \|\phi_{t_0}^{t_1}(x_0){-}\phi_{t_0}^{t_1}(y_0)\|_M\,{\leq}\,\sqrt{\lambda_{max}((C^T)^{-1}\left(D_x\phi_{t_0}^{t_1}(\hat{\xi})\right)^T 
  M\,D_x\phi_{t_0}^{t_1}(\hat{\xi}) C^{-1})}~\|x_0-y_0\|_M
\end{equation*}
\qed
\end{proof}
\begin{lemma}
\label{lemdiscrep2}
Consider the Cauchy problem \eqref{cauchy}. Let $x_0,y_0\in\mathbb{R}^n$ be two initial conditions at time $t_0$.
Let $M_0,M_1\in\mathbb{R}^{n\times n}$ be positive-definite symmetric matrices, and $C_0^TC_0 = M_0$, $C_1^TC_1 = M_1$ their decompositions
respectively. For $t_1 \geq t_0$, it holds that 
\begin{multline*}
  \|\phi_{t_0}^{t_1}(x_0)-\phi_{t_0}^{t_1}(y_0)\|_{M_1}\\\leq \sqrt{\lambda_{max}\left((C_0^T)^{-1}\cdot\left(D_x\phi_{t_0}^{t_1}(\xi)\right)^T\cdot M_1\cdot D_x\phi_{t_0}^{t_1}(\xi)\cdot C_0^{-1}\right)}~\|x_0-y_0\|_{M_0},
\end{multline*}
where $\xi = \omega x_0 + (1-\omega)y_0$ for some $\omega\in[0,1]$.
\end{lemma}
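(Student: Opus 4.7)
The plan is to generalize the proof of Lemma~\ref{lemdiscrep} by replacing the single-norm argument with a change-of-variables that separates the input norm ($M_0$) from the output norm ($M_1$). The two ingredients I will need are: (i) a mean-value-theorem style identity expressing the flow difference as a Jacobian times the initial displacement, and (ii) a linear substitution using the Cholesky-type factor $C_0$ of $M_0$ to convert the $M_0$-norm of the input into a plain Euclidean norm of a transformed vector.

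First, I would reproduce the first part of the Lemma~\ref{lemdiscrep} argument unchanged: from the fundamental theorem of calculus applied to $\omega \mapsto \phi_{t_0}^{t_1}(\xi(\omega))$ with $\xi(\omega) = \omega x_0 + (1-\omega) y_0$, together with the mean value theorem for vector-valued integrals, I get
\[
\phi_{t_0}^{t_1}(x_0) - \phi_{t_0}^{t_1}(y_0) = D_x\phi_{t_0}^{t_1}(\xi)\,(x_0 - y_0)
\]
for some $\xi = \xi(\omega)$ with $\omega \in [0,1]$. Squaring the $M_1$-norm of the left-hand side and substituting this identity gives
\[
\|\phi_{t_0}^{t_1}(x_0) - \phi_{t_0}^{t_1}(y_0)\|_{M_1}^2 = (x_0 - y_0)^T \bigl(D_x\phi_{t_0}^{t_1}(\xi)\bigr)^T M_1\, D_x\phi_{t_0}^{t_1}(\xi)\,(x_0 - y_0).
\]

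Next, the key step is a change of variable that makes the $M_0$-norm on the input side visible. Writing $z = C_0 (x_0 - y_0)$, so that $x_0 - y_0 = C_0^{-1} z$ and $\|x_0 - y_0\|_{M_0}^2 = z^T z = \|z\|_2^2$, the quadratic form above becomes
\[
z^T (C_0^T)^{-1} \bigl(D_x\phi_{t_0}^{t_1}(\xi)\bigr)^T M_1\, D_x\phi_{t_0}^{t_1}(\xi)\, C_0^{-1} z.
\]
The matrix sandwiched between $z^T$ and $z$ is symmetric positive semidefinite (it has the form $A^T A$ with $A = C_1 D_x\phi_{t_0}^{t_1}(\xi) C_0^{-1}$, using $M_1 = C_1^T C_1$), so the Rayleigh quotient bound gives
\[
z^T (C_0^T)^{-1} \bigl(D_x\phi_{t_0}^{t_1}(\xi)\bigr)^T M_1\, D_x\phi_{t_0}^{t_1}(\xi)\, C_0^{-1} z \leq \lambda_{\max}\!\bigl((C_0^T)^{-1} (D_x\phi_{t_0}^{t_1}(\xi))^T M_1\, D_x\phi_{t_0}^{t_1}(\xi)\, C_0^{-1}\bigr)\, \|z\|_2^2.
\]
Taking square roots and replacing $\|z\|_2$ by $\|x_0 - y_0\|_{M_0}$ yields \eqref{cgm0m1}.

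There is essentially no hard part: the mean value identity is standard, and once the Cholesky-type factor $C_0$ is used to push $M_0$ into the Euclidean inner product, the Rayleigh-quotient bound closes the argument. The only things to be careful about are (a) invertibility of $C_0$, which follows from $M_0$ being positive definite, and (b) ensuring that the sandwiched matrix is symmetric so that $\lambda_{\max}$ is meaningful and nonnegative; both are handled by the factorization $M_1 = C_1^T C_1$ as noted in Definition~\ref{defMnorm}.
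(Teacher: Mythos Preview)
Your proof is correct and follows essentially the same route as the paper's: both invoke the mean-value identity from Lemma~\ref{lemdiscrep}, expand $\|Aw\|_{M_1}^2$ as a quadratic form, insert the factorization $M_0=C_0^TC_0$ (you via the substitution $z=C_0w$, the paper via inserting $C_0^T(C_0^T)^{-1}$ and $C_0^{-1}C_0$), and then apply the Rayleigh-quotient bound. Your explicit remark that the sandwiched matrix is $A^TA$ with $A=C_1D_x\phi_{t_0}^{t_1}(\xi)C_0^{-1}$ is a nice clarification, but otherwise the arguments are the same.
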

\begin{proof}
Let $\xi = \omega x_0 + (1-\omega)y_0$ for some $\omega\in[0,1]$. We use the equality $\phi_{t_0}^{t_1}(x_0)-\phi_{t_0}^{t_1}(y_0) = D_x\phi_{t_0}^{t_1}(\xi)(x_0-y_0)$ derived in the proof of Lemma~\ref{lemdiscrep}.
Let us denote $A:=D_x\phi_{t_0}^{t_1}(\xi)$, and $w = (x_0-y_0)$. It holds that
\begin{multline*}
\|\phi_{t_0}^{t_1}(x_0)-\phi_{t_0}^{t_1}(y_0)\|_{M_1} = \|Aw\|_{M_1}=\sqrt{(Aw)^TM_1(Aw)}=\sqrt{w^T(A^TM_1A)w}=\\
\sqrt{w^TC_0^T((C_0^T)^{-1}A^TM_1AC_0^{-1})C_0w} \leq \sqrt{\lambda_{max}\left((C_0^T)^{-1}A^TM_1AC_0^{-1}\right)}\sqrt{w^TC_0^TC_0w}\\
\hspace*{-5mm}=\sqrt{\lambda_{max}\left((C_0^T)^{-1}A^TM_1AC_0^{-1}\right)}~\|w\|_{M_0}\quad\qed
\end{multline*}
\end{proof}

\end{document}